\let\NAT@parse\undefined
\newtheorem{lemma}{Lemma}
\newtheorem{theorem}{Theorem}
\newtheorem{definition}{Definition}
\newcommand{\real}{\ensuremath{\mathbb{R}}}
\newcommand{\realnonnegative}{\ensuremath{\mathbb{R}}_{\ge 0}}
\newcommand{\union}{\operatorname{\cup}}
\newcommand{\subscr}[2]{#1_{\textup{#2}}}
\renewcommand{\epsilon}{\varepsilon}
\newcommand{\argmin}{\ensuremath{\operatorname{argmin}}}
\newcommand{\longthmtitle}[1]{\mbox{}{\bf \textit{(#1).}}}
\title{A Numerical Verification Framework for Differential Privacy in
  Estimation}%Moving-Horizon Estimators}
\author{Yunhai Han$^{1,2}$ and Sonia
  Mart{\'i}nez$^{1}$% <-this % stops a space
  \thanks{$^{1}$Department of Mechanical and Aerospace Engineering,
    University of California at San Diego, La Jolla, CA, 92093, USA
    (email: y8han@ucsd.edu; soniamd@ucsd.edu).}
  \thanks{$^{2}$Laboratory for Intelligent Decision and Autonomous
    Robots, George W. Woodruff School of Mechanical Engineering,
    Georgia Institute of Technology, Atlanta, GA, 30313, USA (email:
    yhan364@gatech.edu). }  \thanks{This work was supported by Grant
    ONR N00014-19-1-2471.}  }%
\begin{document}
\maketitle
\thispagestyle{empty}
\pagestyle{empty}
\begin{abstract}
% need to add an invisible character before the first word.
This work proposes an algorithmic method to verify differential privacy for estimation mechanisms with performance guarantees.  Differential privacy makes it hard to distinguish outputs of a mechanism produced by adjacent inputs. While obtaining theoretical conditions that guarantee differential privacy may be possible, evaluating these conditions in practice can be hard. This is especially true for estimation mechanisms that take values in continuous spaces, as this requires checking for an infinite set of inequalities. Instead, our verification approach consists of testing the differential privacy condition for a suitably chosen finite collection of events at the expense of some information loss.  More precisely, our data-driven, test framework for continuous range mechanisms first finds a highly-likely, compact event set, as well as a partition of this event, and then evaluates differential privacy wrt this partition. This results into a type of differential privacy with high confidence, which we are able to quantify precisely. This approach is then used to evaluate the differential-privacy properties of the recently proposed $W_2$ Moving Horizon Estimator. We confirm its properties, while comparing its performance with alternative approaches in simulation.
\end{abstract}
\section{INTRODUCTION} \label{section: introduction} %Maybe needs more instruction on this part
%Need privacy in the estimaton (database)
A growing number of emerging, on-demand applications, require data
from users or sensors in order to make predictions and/or
recommendations. Examples include smart grids, traffic networks, or
home assistive technology. % Commonly, the estimation
% results are more accurate in the benefit of the online dynamic data,
% which is good for serving the tasks of monitoring or control.
While more accurate information can {\color{black}benefit the quality of service}, an important concern is that sharing personalized
data may compromise the privacy of its users.  % is the case even
% when the released function of the data is anonymous, and data is
% embedded in large data sets.
This has been demonstrated over Netflix datasets~\cite{AN-VS:06}, as
well as on traffic monitoring
systems~\cite{BH-TI-QJ-DW-AMB-RH-JH-MG-MA-JB:12}.% , where a user's
% information is identified by leveraging other ``side information''
% such as that coming from third-party
% datasets.
% This \cite{AN-VS:06}\sonia{cite also c Dwork's work
%   here on diff privacy}. This has been already demonstrated in
% \cite{AN-VS:06}, in which the researchers found that although the
% released datasets provided by Netflix was shown to preserve customer
% privacy, it is still possible to identify individual users by using
% side information from other third-parties, such as public
% recommendation systems.
%  Traffic monitoring system \cite
% {BH-TI-QJ-DW-AMB-RH-JH-MG-MA-JB:12} that measures the person's
% position from their smartphones is another example where the person's
% position traces can be identified by the correlation between their
% locations and residences. Hence, several researchers have developed
% privacy preserving mechanisms to address these problems in order to
% promote the applications of emerging systems.

%Definition of differential privacy -> make it way into system and control
Initially proposed from the database literature, \textit{Differential
  Privacy} \cite{CD-MF-NK-SA:06} addresses this issue, and has become
a standard in privacy specification % \cite{CD:06} and
of commercial products. % By randomizing a mechanism's
% (or mapping's) outputs, it becomes more difficult to make inferences
% about individual's contributions 
% A differentially-private algorithm randomizes a system or mechanism'
% outputs in such a way that their distribution is not too sensitive
% to the inputs provided by any single sender. As a result, it is more
% difficult for the adversaries to make any inferences about
% individuals based on the system outputs and side information
% \cite{SP-KAS:09}.
More recently, differential privacy has attracted the attention of the
Systems and Control literature~\cite{JC-GED-SH-JLN-SM-GJP:16} and {\color{black}been} applied on control systems \cite{YW-ZH-SM-GED:14}, optimization
\cite{EN-PT-JC-2019}, and estimation and filtering \cite{JLN:14}.  In
particular, the work \cite{JLN-GJP:14} develops the concept of
differential privacy for Kalman filter design. The work
\cite{VK-SM:20} proposes a more general moving-horizon estimator via a
perturbed objective function to enable privacy. % A concise and
% broad overview of the differential privacy in control system can be
% found in . 
To the best of our knowledge, all of these works only propose sufficient
and theoretical conditions for differential privacy.

%shortcomings -> verification (database)
However, the design of such algorithms can be subtle and
error-prone. It has been proved that a number of algorithms in the
database literature are incorrect \cite{YC-AM:15} \cite{ML-DS-NL:16},
and their claimed level of privacy can not be achieved. Motivated by
this, the work \cite{ZYD-YXW-GHW-DFZ-DK:18} introduces an approach to
detect the violation of differential privacy for discrete
mechanisms. Yet, this method is only applicable for mechanisms that
result in a small and finite number of events. Further, a precise
characterization of its performance guarantees is not provided.
% this
% method. % To the best of the authors' knowledge, currently, there is
% no published method that is developed for the same purpose but
% targets at online estimation for continuous systems. It disagrees
% with the fact that a large number of differentially private
% estimators are being studied on.

%transported into the field of control and system (-> system and control))
This motivates our work with contributions in two directions. First,
we build a tractable, data-driven, framework to detect violations of
differential privacy in system estimation.  To handle the infinite
collection of events of continuous spaces, the evaluation is
conditioned over a highly-likely, compact set. This results into a
type of approximate differential privacy with high confidence. We then approximate this set in a data-driven fashion. Further, tests are
performed wrt a collectively-exhaustive and mutually exclusive
partition of the approximated highly-likely set.  By assuming the
probability of these events is upper bounded by a small constant, and
implementing an exact hypothesis test procedure, we are able to
quantify the approximate differential privacy of the estimation wrt
two adjacent inputs with high-likelihood. Second, we employ this
procedure to evaluate the differential privacy of a previously
proposed, $W_2$-MHE estimator. Our experiments show some interesting
results including: i) the theoretical conditions for the $W_2$-MHE
seem to hold but may be rather conservative, ii) there is an
indication that perturbing the output estimation mapping
results in a better performance than perturbing the input sensor
data, iii) differential privacy does depend on sensor locations,
and iv) the $W_2$-MHE performs better than a differentially-private
EKF.

%Paper organization  save space
%The rest of the paper is organized as follows: Section
%\ref{sec:Motivation} tells the motivation and formulates the problem;
%Section \ref{sec:estimator} introduces a differentially private
%estimator; Section \ref{section:methodology} presents the technical
%details of this framework; Section \ref{section: experiments} shows
%the simulation results of a linear oscillator with a nonlinear
%observation model; Section \ref{section: discussion} includes the
%conclusions.
%\sonia{I'll review the intro in the last part}
%\section{RELATED WORK} \label{section: related work} %Maybe this part is redundant
\vspace{-0.1cm}
\section{Problem Formulation} %why we need Df
%what is DF
\label{sec:Motivation}
Consider a sensor network performing a distributed estimation
 task. Sensors may have different owners, who wish to maintain their
 locations private. Even if communication among sensors is secure, an
 adversary may have access to the estimation on the target, and other
 network side information\footnote{This can be any information
   including, but not limited to, the target's true location, and all
   other sensor positions.}  to gain critical knowledge about any
 individual sensor; see Figure~\ref{fig:Motiv_df}.
   \begin{figure}[t]
   \setlength{\abovecaptionskip}{+0.5cm}
\setlength{\belowcaptionskip}{-0.4cm}
 \centering
 \includegraphics[height=5cm,width=7cm]{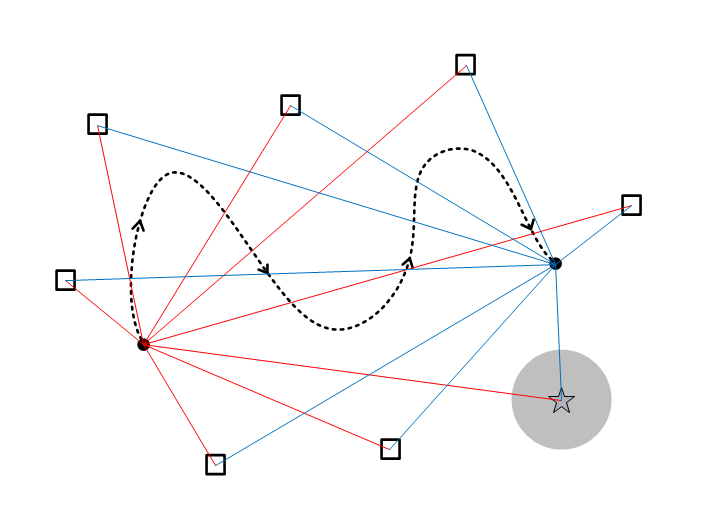}
 \caption{The solid circle represents the moving target that is being
   estimated; the squares represent the location of known sensors (side
   information known by an adversary); the star represents the sensor
   location of a particular sensor that an adversary is tracking. The
   actual location of the starred sensor can be anywhere within the
   shaded circle (hypothesis). The diameter of the hypothesis depends
   on the level of differential privacy of the estimator. The dashed
   curve represents the target's trajectory and the arrows indicate the
   direction. The set of red/blue lines represent the output data
   released from sensors when the target is at the start/end point. In
   practice, an adversary can probably have access to a time history of
   data.}
 \label{fig:Motiv_df}
 \end{figure} 
 
We now start by defining the concept of differential privacy in
estimation, and state our problem objectives.

{\color{black}Let a system and observation models be of the form
  % (see~\cite{VK-SM:20} for
% further assumptions mappings and noises)
}:
\begin{equation}\label{equ:system}
   \Omega: \begin{cases} 
    x_{k+1}=f\left(x_{k}, w_{k}\right), \\
    {\color{black}y_{k}=h\left(x_{k}, v_{k}\right)},
  \end{cases}
\end{equation}
where $x_k \in \mathbb{R}^{d_{X}}, y_k \in
\mathbb{R}^{d_{Y}}$, $w_k \in \mathbb{R}^{d_{W}}$ and $v_k \in
\mathbb{R}^{d_{V}}$. Here, $w_k$ and $v_k$ represent the iid process 
and measurement noises at time step $k$, respectively.%  We assume:

%  \noindent \textbf{1. (Lipschitz continuity)}. The functions
% $f$ and $h$ are Lipschitz continuous, with $\left\|f\left(x_{1},
%     w_{1}\right)-f\left(x_{2}, w_{2}\right)\right\| \leq$
% $c_f\left\|x_{1}-x_{2}\right\|+c_n\left\|w_{1}-w_{2}\right\|$ and
% $\left\|h\left(x_{1}\right)-h\left(x_{2}\right)\right\| \leq$
% $c_h\left\|x_{1}-x_{2}\right\|$, for any $x_i \in \real^{d_X}$, and $w_i
% \in \real^{d_W}$, $i = 1,2$. \newline 
% \textbf{2. (Noise
%   characteristics)}. The noise sequences $\left\{w_{k}\right\}_{k \in
%   \mathbb{N}}$ and $\left\{v_{k}\right\}_{k \in \mathbb{N}}$ are
% independent samples from distributions $\omega$ and $\nu$ with
% $\left|w\right|$ and $\left|v\right|$ bounded. Moreover, we assume
% that $\mathbb{E}_{\omega}\left[w_{k}\right]=0$ and
% $\mathbb{E}_{\nu}\left[v_{k}\right]=0$.

Let $\{0, \ldots, T\}$ be a time horizon, and {\color{black}the sensor data up to time $T$ be denoted by
{\color{black}$\mathrm{y}_{0: T}$ $=\left(y_{0}^{\top}, \ldots,
    y_{T}^{\top}\right)^{\top}$}}. An
estimator or mechanism $\mathcal{M}$ of
{\color{black}\eqref{equ:system}} is a stochastic mapping:
$\real^{(T+1)d_Y}\rightarrow \real^{m d_X}$, for some $m
\ge 1$, which assigns sensor data $\mathrm{y}_{0:T}$ to a random state
trajectory estimate. We will assume that the distribution of
$\mathcal{M}(\mathrm{y}_{0:T})$, $\mathbb{P}$, is independent of the
distribution of $\mathrm{y}_{0:T}$. In Section~\ref{section:
  experiments}, we test a $W_2$-MHE filter that takes this form and
assimilates sensor data online. Roughly speaking, the $W_2$-MHE
employs a moving window $N$ and sensor data $(y_{k+1},\ldots,y_{k+N})$
to estimate the state at time $k$; see~\cite{VK-SM:20} for more
information.
%\sonia{this is my proposed changed, let's discuss in a
%  meeting. Let's change the $\delta$ by a $d$ in what follows, not to
%  confuse with the $\delta$ in the other definition of differential privacy}
{\color{black} 
  \begin{definition}\textit{(\textbf{($\epsilon$,$d$-adjacent), $\lambda$-approximate, Differential Privacy})}
    Let $\mathcal{M}$ be a state estimator of System~\ref{equ:system}
    and $d_{\mathrm{y}}$ a distance metric on $\real^{(T+1)
      d_Y}$. Given $\epsilon, \lambda, d \in \realnonnegative$,
    $\mathcal{M}$ is ($\epsilon$,
    $d$-adjacent), $\lambda$-approximate,  differentially private if for any 
    $\mathrm{y}_{0: T}^{1}, \mathrm{y}_{0: T}^{2} \in \real^{(T+1)
      d_Y}$, with
    $d_{\mathrm{y}}(\mathrm{y}_{0: T}^{1}, \mathrm{y}_{0: T}^{2}) \le
    d$ we have \vspace*{-0.25cm}
    \begin{equation}\label{equ:event} {\small
    \mathbb{P}(\mathcal{M}\left(\mathrm{y}_{0: T}^{i})
      \in E\right) \leq \mathrm{e}^{\varepsilon}
    {\mathbb{P}(\mathcal{M} (\mathrm{y}_{0: T}^{j})
        \in E))} + \lambda, \; }
  \end{equation}
  for $i, j = 1,2,$ and for all $ E \subset \text
  {range}(\mathcal{M})$. In what follows, we use the notation
  $(\epsilon, d$-adj$)$-$\lambda$ (resp.~$(\epsilon, d$-adj$)$, for
  $\lambda = 0$).
  \end{definition}
  This notion of privacy 
%  for $\lambda = 0$ 
  matches the standard
  definition of~\cite{JC-GED-SH-JLN-SM-GJP:16}\cite{JLN-GJP:14}, for a
  $1$-adjacency relation given by a distance $\le d$.  We are mostly
  interested in the case of $\lambda = 0$ and the standard $(\epsilon,d$-adj$)$ differential privacy \cite{VK-SM:20}. However, later we discuss how to
  choose a $\lambda$ to \textit{approximate} it via $(\epsilon,
  d$-adj$)$-$\lambda$ differential privacy. Ideally, $\lambda$ is to be chosen as
  small as possible.}
Here, $d_{\mathrm{y}}\left(\mathrm{y}_{0: T}^{1},
  \mathrm{y}_{0: T}^{2}\right)$ is induced by the 2-norm. However, the
results do not depend on the choice of metric.

%\paragraph{Problem statement.} 
Finding theoretical conditions that guarantee {\color{black}$(\epsilon$,$d$-adj$)$} differential privacy
can be difficult, conservative, and hard to verify. Thus, in this work
we aim to: 
\begin{enumerate}
\item Obtain a tractable, numerical test procedure to evaluate the
   differential privacy of an estimator; while providing quantifiable
   performance guarantees of its correctness.
\item Verify numerically the differential-privacy guarantees of the 
    $W_2$-MHE filter of~\cite{VK-SM:20}; and compare its
   performance with that of an extended Kalman filter.
\item Evaluate the differences in privacy/estimation when the
   perturbations are directly applied to the sensor data before
   the filtering process is done. 
\end{enumerate}

 Our approach employs a statistical, data-driven method. Although a
 main motivation for this work is the evaluation of the $W_2$-MHE
 filter, the method can be used to \textit{verify} the privacy of \textbf{any
 mappings} with a continuous space range.
 \vspace{-0.15cm}
 \section{Approximate {\color{black}$(\epsilon$,$d$-adj$)$}
   Differential Privacy}
\label{section:Approximate}
In this section, we {\color{black} start by introducing a} notion of
{\color{black}high-likelihood $(\epsilon$,$d$-adj$)$ differential
  privacy. This is a first step to simplify the evaluation of
  differential privacy via the proposed numerical framework. A second
  step lies on the identification of a suitable space partition. }

%\begin{definition}[\bfseries {\color{black}$\epsilon$-$\lambda$} Differential Privacy {\color{black}\cite{JC-GED-SH-JLN-SM-GJP:16}\cite{JLN-GJP:14}}]

\begin{definition}\longthmtitle{High-likelihood 
    {\color{black}$(\epsilon$,$d$-adj$)$} Differential
    Privacy} \label{def:highlikelyDF}Suppose that $\mathcal{M}$ is a
  state estimator of System~\ref{equ:system}. Given $\epsilon,d \in
  {\color{black}\realnonnegative}$, we say that $\mathcal{M}$ is
  {\color{black}$(\epsilon$,$d$-adj$)$}
%  \footnote{OK, let's remove the approximation from this definition.}
  \textit{differentially private with high likelihood $1 -\theta$} 
%  \yunhai{shoudl we use $\gamma$ instead?}
  % (resp.~\textit{approximate {\color{black}$(\epsilon$,$\delta$-adj)} differentially
  %   private})
  if there exists an event $R$ with $\mathbb{P}(R) \ge 1 -
  \theta$ % (resp.~there exists a parameter $\lambda$)
  such that, for any two $\mathrm{y}_{0:T}^i$, $i = 1,2$, with
  $d_{\mathrm{y}}(\mathrm{y}_{0:T}^1,\mathrm{y}_{0:T}^2) \le d$,
  we have:
  \begin{equation*}
    \mathbb{P}(\mathcal{M}(\mathrm{y}_{0:T}^i )\in E |R) \le
    e^{\epsilon}  \mathbb{P}(\mathcal{M}(\mathrm{y}_{0:T}^j) \in E |R),
  \end{equation*}
  for $i, j \in\{ 1,2\}$ and all events $E \subseteq
  \text{range}(\mathcal{M})$.% ;  (resp.
  % \begin{equation}\label{equ:approxDP}
  %   \mathbb{P}(\mathcal{M}(\mathrm{y}_{0:T}^i) \in E  ) 
  %   \le e^\epsilon \mathbb{P}(\mathcal{M}(\mathrm{y}_{0:T}^j) \in E )
  %   + \lambda,
  % \end{equation} %\yunhai{$\eta$ or $\gamma$, from proof, seems to be $\gamma$}
  % for $i,j = 1,2$, and any event $E\subseteq \text{range}(\mathcal{M})$). {\color{black} As mentioned in Definition.~\ref{def:differential9}, $\lambda$ is the result of approximation.}
\end{definition}

\begin{lemma}
  Suppose that $\mathcal{M}$ is a high-likelihood {\color{black}$(\epsilon$,$d$-adj$)$}
  differentially private estimator, with likelihood $1 - \theta$. Then,
  $\mathcal{M}$ is  {\color{black}$(\epsilon$,$d$-adj$)$-$\lambda$} differentially
  private {\color{black}with $\lambda = \theta$}.
\end{lemma}
\begin{proof}
  Let $R$ be the high-likely event wrt which $\mathcal{M}$ is
  high-likely differentially private. Let $\overline{R}$ be its
  complement and $E \subseteq \text{range}(\mathcal{M})$. We have
  \begin{align*}
    & \mathbb{P}(\mathcal{M}(\mathrm{y}_{0:T}^1) \in E) \\
    &=
    \mathbb{P}(\mathcal{M}(\mathrm{y}_{0:T}^1) \in E | R) \mathbb{P}(R) +
    \mathbb{P}(\mathcal{M}(\mathrm{y}_{0:T}^1) \in E|\overline{R}) \mathbb{P}(\overline{R})\\
    & \le e^\epsilon \mathbb{P}(\mathcal{M}(\mathrm{y}_{0:T}^2) \in E
    | R) \mathbb{P}(R) + \theta \mathbb{P}(\mathcal{M}(\mathrm{y}_{0:T}^1) \in E |
    \overline{R})\le  \\
     & e^\epsilon [\mathbb{P}(\mathcal{M}(\mathrm{y}_{0:T}^2) \in E
     | R) \mathbb{P}(R) + \mathbb{P}(\mathcal{M}(\mathrm{y}_{0:T}^2) \in E
     | \overline{R}) \mathbb{P}(\overline{R}) ] + \theta 
\\
    & = e^\epsilon  \mathbb{P}(\mathcal{M}(\mathrm{y}_{0:T}^2) \in E)
    + \theta,
  \end{align*}
  similarly, the roles of $\mathrm{y}_{0:T}^1$ and
  $\mathrm{y}_{0:T}^2$ {\color{black}can be exchanged}.  % The previous inqualities employ the theorem
  % of total probability, the high-likelihood differential privacy
  % condition, and the fact that $\mathbb{P}(\overline{R}) \le \theta$.
\end{proof}
Definition~\ref{def:highlikelyDF} still requires checking {\color{black}
  conditions involving an} infinite number of event {\color{black}
  sets}.  Our test framework limits evaluations to a finite
{\color{black}collection as follows.}
\begin{definition}[\bfseries Differential privacy wrt a space
  partition]
  Let $\mathcal{M}$ be an estimator of System \ref{equ:system} and
  $\mathcal{P} = \{E_1, \ldots, E_n\}$ be a space
  partition\footnote{By partition we mean a collection of mutually
    exclusive and collectively exhaustive set of events wrt
    $\mathbb{P}$.} of $\text{range}(\mathcal{M})$. We say that
  $\mathcal{M}$ is {\color{black}$(\epsilon$,$d$-adj$)$} differentially
  private wrt $\mathcal{P}$ if the definition of
  {\color{black}$(\epsilon$,$d$-adj$)$} differential privacy holds
  {\color{black} for} each $E_k \in \mathcal{P}$.
%\sonia{I've erased a sentence here, which was out of place --- consistency also holds}
\end{definition}
The following {\color{black} helps} explain the relationship between
{\color{black}$(\epsilon$,$d$-adj$)$} differential privacy wrt a
partition and  {\color{black} the original $(\epsilon$,$d$-adj$)$
  differential privacy}.
\begin{lemma} \label{lemma3}
  Let $\mathcal{M}$ be a state estimator of System \ref{equ:system}, and
  consider a partition of $\text{range}(\mathcal{M})$,
  {\color{black}$\mathcal{P}_1=\{E_1, \ldots, E_{n_1}\}$}, which is \textit{finer} than
  another partition {\color{black}$\mathcal{P}_2$ $=\{F_1, \ldots, F_{n_2}\}$ ($n_1 > n_2$)}.  That is,
  each $F_i$ can be represented by the disjoint union $F_i =
  \union_{s=1}^{m_i}E_{l_s}$.  Then, if $\mathcal{M}$ is
 {\color{black}$(\epsilon$,$d$-adj$)$} differentally private wrt $\mathcal{P}_1$, then it
  is also differentially private wrt $\mathcal{P}_2$.
\end{lemma}

\begin{proof}
By assumption, it holds that
  $\mathbb{P}\left(\mathcal{M}\left(\mathrm{y}_{0: T}^{1}\right) \in
    E_i\right)$ $ \leq \mathrm{e}^{\varepsilon}
  \mathbb{P}\left(\mathcal{M}\left(\mathrm{y}_{0: T}^{2}\right) \in
    E_i\right)$. %  and $\mathbb{P}\left(\mathcal{M}\left(\mathrm{y}_{0:
  %       T}^{2}\right) \in E_i\right)$ $ \leq \mathrm{e}^{\varepsilon}
  % \mathbb{P}\left(\mathcal{M}\left(\mathrm{y}_{0: T}^{1}\right) \in
  %   E_i\right)$ for $\mathrm{y}_{0: T}^{1}$, $ \mathrm{y}_{0: T}^{2}$. In
  % the following, we address the first inequality since their
  % treatments are analogous. 
  Take $F_i = E_{l_1} \union \ldots \union E_{l_{m_{i}}}$, then, from
  the properties of partition, we obtain:
\begin{displaymath}\begin{array}{ll}
    &\mathbb{P}\left(\mathcal{M}\left(\mathrm{y}_{0: T}^{1}\right) \in F_i \right)  = \mathbb{P}\left(\mathcal{M}\left(\mathrm{y}_{0: T}^{1}\right) \in E_{l_1} \union \ldots \union E_{l_{m_{i}}}\right)\\
      &
     = \sum^{m_i}_{s=1}\mathbb{P}\left(\mathcal{M}\left(\mathrm{y}_{0: T}^{1}\right) \in E_{l_s} \right) 
    \\ &\le
    e^{\epsilon}\sum^{m_i}_{s=1}\mathbb{P}\left(\mathcal{M}\left(\mathrm{y}_{0:
          T}^{2}\right) 
      \in E_{l_s} \right) 
     = e^{\epsilon}\mathbb{P}\left(\mathcal{M}\left(\mathrm{y}_{0: T}^{2}\right) \in F_i \right).
\end{array}
\end{displaymath}
Similarly, the roles of $\mathrm{y}_{0:T}^1$ and
$\mathrm{y}_{0:T}^2$ {\color{black} can be exchanged.}% The statement follows. %This holds true since
%\sonia{run a spell checker, I find some typos here and there  --- Done}
%there is no overlap between each $E_{l_s}$. Thus, we can conclude that
% Thus, if $\epsilon$-$\delta$ differential privacy holds wrt $\mathcal{P}_1$,
% it also holds wrt $\mathcal{P}_2$.
\end{proof}
% From the above, it is easy to see that an estimator
{\color{black}Thus, it follows intuitively that $\mathcal{M}$ is
  $(\epsilon$,$d$-adj$)$ differentially private if it is
  $(\epsilon$,$d$-adj$)$ differentially private wrt infinitesimally
  small partitions}. Now, by considering partitions of a given
\textit{resolution}, we can also guarantee 
{\color{black}a type of approximate $(\epsilon$,$d$-adj$)$}  privacy:
\begin{lemma} 
  Consider a partition $\mathcal{P} = \{E_i\}_{i \in \mathcal{I}}$
  such that $\mathbb{P}(E_i) \le \eta$ for all $i \in \mathcal{I}$.
  Then, if {\color{black}$(\epsilon$,$d$-adj$)$} differential privacy
  holds wrt the partition $\mathcal{P}$, then $\mathcal{M}$ is
  {\color{black}$(\epsilon$,$d$-adj$)$-$\lambda$} differentially
  private with $\lambda = 2\eta e^{\epsilon}$.
%    \sonia{I've rephrased this, please note the
%    change. This is obviously true.}
\end{lemma}
\begin{proof}
  For any  $R$, we have $\mathbb{P}(\mathcal{M}(\mathrm{y}_{0:T}^1) \in R) 
    % \mathbb{P}(\mathcal{M}(\mathrm{y}_{0:T}^1) \in R \cap \cup_{i \in \mathcal{I}}
    % E_i)
    = \sum_i \mathbb{P}(\mathcal{M}(\mathrm{y}_{0:T}^1) \in R \cap E_i).$ %and since $\mathcal{P}$ is a partition, we have
 % {\small \begin{align*}
 % & \mathbb{P}(\mathcal{M}(\mathrm{y}_{0:T}^1) \in R) 
 %    % \mathbb{P}(\mathcal{M}(\mathrm{y}_{0:T}^1) \in R \cap \cup_{i \in \mathcal{I}}
 %    % E_i)
 %    = \sum_i \mathbb{P}(\mathcal{M}(\mathrm{y}_{0:T}^1) \in R \cap E_i).
 %  \end{align*}}
 By hypothesis,
 \vspace{-0.15cm}
  \begin{align*}
    & \mathbb{P}(\mathcal{M}(\mathrm{y}_{0:T}^1) \in E_i) \le e^\epsilon
    \mathbb{P}(\mathcal{M}(\mathrm{y}_{0:T}^2) \in E_i)
  \Longleftrightarrow \\
    & \mathbb{P}(\mathcal{M}(\mathrm{y}_{0:T}^1) \in E_i\cap(R \cup
      \overline{R})) \le e^\epsilon
      \mathbb{P}(\mathcal{M}(\mathrm{y}_{0:T}^2 )\in E_i \cap(R \cup
      \overline{R}) ),
  \end{align*}
  for $i \in \mathcal{I}$, and where $\overline{R}$ is the
  complement of $R$. Thus,
  \begin{align*}
   & \mathbb{P}(\mathcal{M}(\mathrm{y}_{0:T}^1) \in E_i\cap R) +
    \mathbb{P}(\mathcal{M}(\mathrm{y}_{0:T}^1) \in E_i\cap \overline{R}) \\ &\le
    e^\epsilon \mathbb{P}(\mathcal{M}(\mathrm{y}_{0:T}^2) \in E_i \cap R) +
    e^\epsilon \mathbb{P}(\mathcal{M}(\mathrm{y}_{0:T}^2 )\in E_i \cap
    \overline{R}).
  \end{align*}
  Now, if $\mathbb{P}(E_i) \le \eta$, we have
    \begin{align*}
      & \mathbb{P}(\mathcal{M}(\mathrm{y}_{0:T}^1 )\in E_i\cap R) \le
      e^\epsilon \mathbb{P}(\mathcal{M}(\mathrm{y}_{0:T}^2) \in E_i
      \cap R) + 2 \eta e^{\epsilon}.
  \end{align*}
Similarly, the roles of $\mathrm{y}_{0:T}^1,
\mathrm{y}_{0:T}^2$ {\color{black} can be exchanged}. 
%This results in the approximate differential
%privacy as in {\color{black} ~\eqref{equ:approxDP}}.
\end{proof} {\color{black} Our approach is based on checking
  differential privacy wrt a partition given by $\{\overline{R},
  E_i\}_{i = 1}^n$, where $\overline{R}$ is the complement of a highly
  likely event, and $\{E_i\}$ is a partition of $R$. }

%Finally, we specify the guarantees of the numerical
%test. 
\vspace{-0.2cm}
 \section{Differential Privacy Test Framework}
 \label{section:methodology} 
In this section, we present the components of our test framework (Section~\ref{sec:testFramework}) and its theoretical guarantees (Section~\ref{subsection:guarantees}). 
\vspace{-0.2cm}
\subsection{Overview of the differential-privacy test framework}
\label{sec:testFramework} 
Privacy is evaluated wrt two {\color{black}$d$}-close $\mathrm{y}_{0:T}^1$,
  $\mathrm{y}_{0:T}^2$ as follows: 
\begin{enumerate}

\item % Checking $\epsilon$-$\delta$ differential-privacy of
  % $\mathcal{M}$ wrt , %$i = 1,2$
  % requires verifying
  Instead of verifying {\color{black}$(\epsilon$,$d$-adj$)$ privacy}
  for an infinite number of events, an \texttt{\small
    EventListGenerator} module extracts a finite collection
  \texttt{\small EventList}. This is done by partitioning an
  approximated high-likely event set.

\item Next, a \texttt{\small WorstEventSelector} module identifies the
  worst-case event in \texttt{\small EventList} that violates
  {\color{black}$(\epsilon$,$d$-adj$)$} differential privacy with the
  highest probability.

\item Finally, a \texttt{\small HypothesisTest} module evaluates {\color{black}$(\epsilon$,$d$-adj$)$} ifferential privacy wrt the worst-case event.

\end{enumerate}
The overall description is summarized in Algorithm~\ref{alg: overview}.
\begin{algorithm}[h]
\begin{algorithmic}[1]
    \Function{Test Framework}{$\mathcal{M}, \epsilon$, $\mathrm{y}_{0: T}^{1}, \mathrm{y}_{0: T}^{2}$}
    \State \textbf{Inputs:} Target estimator $\mathcal{M}$, privacy
    level $\epsilon$, sensor data
    % \State \phantom{In put:} Desired differential privacy($\epsilon$)
    % \State \phantom{In put:} $\delta$-adjacent sensor outputs
    ($\mathrm{y}_{0: T}^{1}, \mathrm{y}_{0: T}^{2}$)
    \State \texttt{\small EventList} = \texttt{\small EventListGenerator}($\mathcal{M}$, $\mathrm{y}_{0: T}^{1}$)
    \State \texttt{\small WorstEvent} = \State \texttt{\small WorstEventSelector}($\mathcal{M}, \epsilon, \mathrm{y}_{0: T}^{1}, \mathrm{y}_{0: T}^{2}$, 
    \State $\quad $ \texttt{\small EventList})
    \State $p^{+},p_{+}$ = \texttt{\small HypothesisTest}($\mathcal{M}, \epsilon$, $\mathrm{y}_{0: T}^{1}, \mathrm{y}_{0: T}^{2}$, 
    \State $\quad$  \texttt{\small WorstEvent})
    \State Return $p^{+},p_{+}$
    \EndFunction
  \end{algorithmic}
\caption{{\color{black}$(\epsilon$,$d$-adj$)$} Differentially-private Test Framework}
\label{alg: overview}
\end{algorithm} 

We now describe each module in detail. 
\subsubsection{\texttt{\small
    EventListGenerator} module.} \label{sec:EventListGenerator}
% An estimator's output is subject to both the natural
%   System noises~\eqref{equ:system} and additional perturbations to
%   make it differentially private. In the $W_2$-MHE filter, these
%   perturbations are introduced via an entropy term in the MH
%   optimization, as an additive and uniformly distributed noise
%   modulated by the term $s \in [0,1]$.
%To assess if $\delta$-adjacent sensor data can be distinguished using
%$\mathcal{M}$, this module first computes a high-likely event set.

Consider System~\ref{equ:system}, with initial condition $x_0 \in
\mathcal{K}_0 \subset \mathbb{R}^{d_{X}}$. % , and subject to
% system disturbances $(\mathcal{\omega}, v) \in \mathcal{W}\times
% \mathcal{V} \subseteq \real^{d_W}\times \real^{d_V}$. Suppose a $s$ is
% chosen in the $W_2$-MHE filter and let $\mathcal{Y} \subset
% \real^{d_Y}$, the space of sensor data.
%$\zeta \in \mathcal{Z} \subseteq \real^{d_X}$. 
The estimated state $\hat{x}_k$ under $\mathcal{M}$ for a given
$\mathrm{y}_{0:T}^1$, belongs to a set of all possible
estimates % obtained by varying all possible
% values of the disturbances (system, and entropy-induced disturbances)
% up to time $k$,
given $x_0$ and $\mathrm{y}_{0:T}^1$. Denote this set as $ R_{[0, k]} $.%  omiting its
% dependency on the initial condition, and the sensor data.
% , and the $s$
% parameter choice.
%\sonia{I'm going to omit the details, it should be fine --- look good to me}
% \begin{align*}
%   R_{[0, k]} &= \mathcal{M}(x_0, \mathrm{y}_{0:k}, s), 
% % \{ \mathcal{M}(x_0, \omega_{0: k}, v_{0:
% %     k},y_{0: k}, s) \,|\, (\omega_{\ell}, v_\ell) \in
% %   \mathcal{W}\times \mathcal{V}, \\ \qquad & y_{\ell} \in
% %   \mathcal{Y}, x_0 \in \mathcal{X}_0, s \in [0,1] \}.
% \end{align*}

%\sonia{This is the previous
%  expression: {\color{magenta} \begin{displaymath} R_{[t_0,t_k]} =
%      {f(x_0,\omega_{t_0}:\omega_{t_k}, s_0:s_k),x_0 \subset
%        \mathbb{R}^{d_{X}}, \omega \subset \mathbb{R}^{d_{W}}}, s_k.
%\end{displaymath}}}
%This indicates all the possible states where the estimator provides at
%time step $k$.
%\end{definition}

% We adopt a recently proposed method in \cite{AD-MA:20}, which does
% this in a data-driven fashion as shown in Alg. \ref{alg: Reachable}
% with probabilistic guarantees of correctness. The theoretical
% justifications can be also found in \cite{AD-MA:20}.

In~\cite{VK-SM:20}, this set is bounded as all disturbances and
initial distribution are assumed to have a compact support. However,
$R_{[0, k]}$ can be unbounded for other estimators. To reduce the set
of events to be checked for {\color{black}$(\epsilon$,$d$-adj$)$} differential privacy:
a) we approximate the set $R_{[0, k]}$ by a compact,
\textit{high-likely set} in a data-driven fashion, and b) we finitely
partition this set by a mutually exclusive collection of events; see
Algorithm~\ref{alg: Generator}.
\vspace{-0.8cm}
\setlength{\textfloatsep}{-0.05cm}% Remove \textfloatsep
\begin{algorithm}[h] 
\begin{algorithmic}[1]
  \Function{EventListGenerator}{$\mathcal{M}$, $\mathrm{y}_{0:
      T}^{1}$, {\color{black}$\beta, \gamma$}} \State \textbf{Input:} Target Estimator($\mathcal{M}$)
  \State \phantom{In put:} Sensor Data($\mathrm{y}_{0: T}^{1}$)
  \State \phantom{In put:} {\color{black}Parameters for Algorithm~\ref{alg: Reachable} ($\beta, \gamma$)}
  \State \texttt{\small HighLikelySet} $\leftarrow$ {\color{black}Apply Algorithm~\ref{alg: Reachable}}
  \State \texttt{\small EventList}
  $\leftarrow$ {\color{black} a partition %(grid)
   of the \texttt{\small HighLikelySet}}
%  \State {\color{black}\hspace{2cm} the \texttt{\small HighLikelySet} }
 \State Return \texttt{\small EventList}
    \EndFunction
  \end{algorithmic}
\caption{\texttt{\small EventListGenerator}}
\label{alg: Generator}
\end{algorithm} 

\vspace{-0.8cm}
Inspired by~\cite{AD-MA:20} focusing on reachability,
we employ the Scenario Optimation approach to approximate a
high-likely set via a product of ellipsoids; see Algorithm~\ref{alg:
  Reachable}. Here, $\Gamma \equiv \Gamma(\beta)$ defines the number
of estimate samples (filter runs) required to guarantee that the
output set contains $1 - \beta$ of the probability mass of $R_{[0,k]}$
with high confidence $1
-\gamma$. % Thus, the estimator is run $\Gamma$ times and we record the
% results. 
Then, a convex optimization problem is solved to find the output set 
as a hyper-ellipsoid with parameters $A^k$ and $b^k$. 
%Section~\ref{subsection:guarantees} summarizes the method guarantees
%from the Scenario Optimization literature.
%\vspace{-0.3cm}
\begin{algorithm}[h] 
\begin{algorithmic}[1]

\State \textbf{Input:} Target Estimator($\mathcal{M}$) with dimension
$d_X$ 
\State \phantom{In put:} Sensor data($\mathrm{y}_{0: T}^{1}$),
parameters  $\beta, \gamma$
%\State \phantom{In put:} Probabilistic guarantee parameters $\beta, \gamma$
\State \textbf{Output:} Matrix $A^{k}$ and vector $b^{k}$ representing
an \State \phantom{Ou tput:}$1$-$\beta$-accurate high-likely set at time step $k$ \State
\phantom{Ou tput:} $R_{k}(A^k,b^k)=\left\{x \in \real^{d_X}\,|\,\|A^k
  x+b^k\|_{2} \leq 1\right\}$ \State \phantom{Ou tput:} with
confidence $1 - \gamma$.

\State Set number of samples $\Gamma$ = 
\State \phantom{Ou tput:}$\left[\frac{1}{\beta}
 \frac{e}{e-1}\left(\log \frac{1}{\gamma}+d_X(d_X+1) / 2+d_X\right)\right\rceil$
\For{$k \in \{0,\ldots,T\}$}
\For{$i \in \{0,\ldots,\Gamma\}$}
%\State Simulator Initialization\sonia{can we omit this line?}
\State Record $z^k_i$ = $\mathcal{M}\left(\mathrm{y}_{0: k}^{1}\right)$
\EndFor
\State Solve the convex problem %\vspace{+0.15cm} 
\State
$\begin{array}{ll}\arg \min _{A^k, b^k} & -\log \operatorname{det} A^k
  \\ \text {subject to } & \left\|A^k z^{k}_i-b^k\right\|_{2}-1 \leq
  0,\,i=0,\ldots,\Gamma \end{array}$ %\vspace{+0.15cm} 
\State return
$A^k, b^k$
\EndFor
\end{algorithmic}
\caption{\texttt{\small HighLikelySet} }
\label{alg: Reachable}
\end{algorithm} 

The data-driven approximation of the high-likely set can now be
partitioned using e.g.~a grid per time step. This is what we do in
simulation later. {\color{black}For the $W_2$-MHE, the last $N$ (window
  size) steps would not be evaluated for differential privacy. Thus,
  the high-likely set is the product of $T+1-N$ ellipsoids.}  
  % this part will be moved to the avxiv version to save space  
  Alternatively, an outline on how to re-use the sample runs of Algorithm~\ref{alg: Reachable} to {\color{black}find a finite
  partition} of the approximated set with a common upper-bounded
probability is provided as follows.

Observe that $\Gamma \equiv \Gamma(\beta)$ is a function of $\beta$,
and denote the output of Algorithm~\ref{alg: Reachable} by $R$. For
$\eta > \beta$, we can choose a number of samples $\Gamma (\eta) <
\Gamma(\beta)$ and solve a similar convex problem as in step {\small
  [13:]}. The resulting hyper-ellipsoid, $\overline{E}$, satisfies
$\overline{E} \cap R \neq \emptyset$ as they contain common sample
runs, and $\mathbb{P}(\overline{E}\cap R) = \mathbb{P}(\overline{E}) +
\mathbb{P}(R) - \mathbb{P}(\overline{E} \cup R) \ge 1 - \eta + 1 -
\beta - 1 = 1 -\eta -\beta$ with confidence $1 - \gamma$. Similarly,
the complement of $\overline{E}$, $E$, is such that $E \cap R \neq
\emptyset$ and $\mathbb{P}(E \cap R) = \mathbb{P}(R) - \mathbb{P}(R
\cap \overline{E}) \le 1 - (1 - \eta -\beta) = \eta + \beta$. This
process can be repeated by choosing different subsets of sample runs
to find a finite collection of sets $E_1 \cap R, \dots, E_n \cap R$
such that $\mathbb{P}(E_i \cap R) \le \eta + \beta$ and $(E_1 \cap R)
\cup \dots \cup (E_n \cap R) = R$. Wlog, it can be assumed that the
sets are mutually exclusive by re-assigning sets overlaps to one of
the sets. This approach can be extended to achieve any desired upper
bound $\eta_d$ for a desired $\beta_d$. To do this, first run
Algorithm~\ref{alg: Reachable} with respect to $\bar{\beta}$ so that $
\eta_d > 2 \bar{\beta} $. This results into a set $\bar{R}$ with
probability at least $1 - \bar{\beta}$ and high confidence $1 -
\gamma$. By selecting a subset of sample runs $\Gamma(\beta_d)$ and
solving the associated optimization problem, one can obtain $R$ with
the desired probability lower-bound $1 - \beta_d$. Now, following the
previous strategy for $\bar{\eta} = \eta_d- \bar{\beta}$, we can
obtain a partition of $R$ and $\bar{R}$ with probability upper bounded
by $\bar{\eta} + \bar{\beta} = \eta_d$ and high confidence
$1-\gamma$. 

%We also note that a finite partition of a compact set is always guaranteed to exist under certain conditions including for
%    those probability measures that are absolutely continuous wrt the
%    Lebesgue measure with a continuous Randon-Nikodym derivative $f$.
    
We also note that a finite partition of a compact set is
always guaranteed to exist under certain conditions, in particular, as the following:
\begin{lemma}
Let $R$ be a compact set in $\real^d$. If $\mathbb{P}$ is an absolutely continuous measure wrt the Lebesgue measure in $\real^d$, and if the Randon-Nikodym derivative of $\mathbb{P}$ is a continuous function, then
there is a finite partition of $R$ of a given resolution $\eta$. 

\end{lemma}

\begin{proof}
    First of all, from absolutely continuity, $\mathbb{P}(E) = \int_E
    f d\mu$, where $\mu$ is the Lebesgue measure, and $f$ the
    Randon-Nikodym derivative. Second, observe that we can take
    arbitrarily small-volume neighborhoods $E_x$, of points of $ x \in
    R$, to make $\mathbb{P}(E_x \cap R)$ as small as we like. This
    follows from $\mathbb{P}(E_x \cap R) \le \text{vol}(E_x)
    \sup_{E_x\cap R} f \le \text{vol}(E_x) \max_R f \le \eta$, where
    $\text{vol}(E_x)$ is the standard volume of the set $E_x$ wrt the
    Lebesgue measure, and $\max_R f <\infty$ by compactness of $R$ and
    continuity of $f$. Using these neighborhoods, we can construct an
    open cover $\{ E_x\}_{x\in R}$ of $R$, i.e.~$R \subseteq \cup_{x
      \in R} E_x$. By compactness of $R$, there must exist a finite
    subcover $\{E_{1} ,\dots, E_{n}\}$, i.e. $R \subseteq E_1 \cup \dots
    \cup E_n$. From this cover, we can obtain
    $\{E_1\cap R, \dots, E_n\cap R\}$, with probabilities
    $\mathbb{P}(E_i \cap R) \le \eta$, for each $i$. Now the sets $E_i
    \cap R$ can then be used to construct a finite partition of $R$,
    with probabilities that will be less or equal than $\eta$.
\end{proof}

\subsubsection{\texttt{\small WorstEventSelector} module.}
We now discuss how to select an $E^{*}$, a most-likely event that
leads to a violation of {\color{black}$(\epsilon$,$d$-adj$)$}
differential privacy; see Algorithm~\ref{alg: Selector}.
\begin{algorithm}[h] 
\begin{algorithmic}[1]
  \Function{WorstEventSelector}{$n, \mathcal{M}$, $\epsilon$,
    $\mathrm{y}_{0: T}^{1}, \mathrm{y}_{0: T}^{2}$, \texttt{\small
      EventList}} \State \textbf{Input:} Target
  Estimator($\mathcal{M}$) \State \phantom{In put:} Desired
  differential privacy($\epsilon$) \State \phantom{In put:}
  {\color{black}$d$}-adjacent sensor data($\mathrm{y}_{0: T}^{1},
  \mathrm{y}_{0: T}^{2}$) \State \phantom{In put:} \texttt{\small EventList}
    
    \State $O_1$ $\leftarrow$ Estimate set  after $n$ runs of $\mathcal{M}(\mathrm{y}_{0: T}^{1})$%  for 
    % \State \phantom{In put:} $n$ times
    \State $O_2$ $\leftarrow$ Estimate  set after $n$ runs of $\mathcal{M}(\mathrm{y}_{0: T}^{2})$ % for 
    % \State \phantom{In put:} $n$ times
    \State $\texttt{\small pvalues} \leftarrow [\phantom{i}]$
    \For{$E\in$ \texttt{\small EventList}}
    \State $c_1$ $\leftarrow$ $|\{i|O_1[i] \in E\}|$
    \State $c_2$ $\leftarrow$ $|\{i|O_2[i] \in E\}|$
    \State $p^{+},p_{+} \leftarrow$ PVALUE $(c_1,c_2,n,\epsilon)$
    \State $p^{*} \leftarrow \min{(p^{+},p_{+})}$
    \State $\texttt{\small pvalues}$.append($p^{*}$)
    \EndFor
    \State \texttt{\small WorstEvent}  $\leftarrow$ \texttt{\small
      EventList}[$\argmin\{\texttt{\small pvalues}\}$] \State Return
    {\small $E^* =$} \texttt{\small WorstEvent}
    \EndFunction
  \end{algorithmic}
\caption{\texttt{\small WorstEvent} Selector}
\label{alg: Selector}
\end{algorithm} 
The returned \texttt{\small WorstEvent} ($E^*$) is then used in
Hypothesis Test function in Algorithm~\ref{alg: overview}. First,
\texttt{\small WorstEventSelector} receives an \texttt{\small
  EventList} from \texttt{\small EventListGenerator}. The algorithm
runs the estimator $n$ times with each sensor data, and counts the
number of estimates that fall in each event of the list,
respectively. Then, a \textbf{PVALUE} function is run to get $p^{+},
p_{+}$; 
see Algorithm~\ref{alg: Hypothesis} (top). The $p$-values
quantify the probability of the Type~I error of the test (refer to the next
subsection).
%  \sonia{I'm not sure if this is very
%  informative, it can already be seen all functions depend on epsilon.
%Unless you explain why there are different choices of $m$ and $n$
%here: i) more accuracy of the test for the worst event, and ii) some
%practical considerations as explained in the experiment section.}
% The event that produces the minimal $p^{*}$ is returned to
% Algorithm~\ref{alg: overview} as $E^{*}$ and \textbf{HypothesisTest}
% on $E^{*}$ will be implemented to check $\epsilon$-$\delta$
% differential privacy.

\subsubsection{\texttt{\small HypothesisTest} module.}
The hypothesis test module aims to verify
{\color{black}$(\epsilon$,$d$-adj$)$ privacy} in a data-driven fashion
for a fixed $E^*$.  Given a sequence of {\color{black}$m$}
statistically-independent trials, the \textit{ocurrence} or
\textit{non-occurrence} of $\mathcal{M}(\mathrm{y}_{0:T}^i) \in E$ is
a Bernouilli sequence and the number of occurrences in
{\color{black}$m$} trials is distributed as a Binomial.
%{\color{black}In simulation, we use different values of $n$ in \texttt{\small WorstEventSelector} and $m$ in  \texttt{\small HypothesisTest} ($m>n$).}
\begin{algorithm}[t] 
\begin{algorithmic}[1]
    \Function{\text{pvalue}}{$c_1, c_2, n, \epsilon$}
    \State $\bar{c_1} \leftarrow$ B($c_1, 1/e^{\epsilon}$)
    \State $s \leftarrow \bar{c_1} + c_2$
    \State $p^{+} \leftarrow$ 1 - Hypergeom.cdf($\bar{c_1}-1|2n,n,s$) \State $\bar{c_2} \leftarrow$ B($c_2, 1/e^{\epsilon}$)
    \State $s \leftarrow \bar{c_2} + c_1$
    \State $p_{+} \leftarrow$ 1 - Hypergeom.cdf($\bar{c_2}-1|2n,n,s$)
    \State return $p^{+}, p_{+}$
    \EndFunction
    \Function{HypothesisTest}{${\color{black}m}, \mathcal{M}$, $\epsilon$, $\mathrm{y}_{0: T}^{1}, \mathrm{y}_{0: T}^{2}$, $E^*$}
    \State \textbf{Input:} Target Estimator($\mathcal{M}$)
    \State \phantom{In put:} Desired differential privacy($\epsilon$)
    \State \phantom{In put:} {\color{black}$d$}-adjacent sensor data($\mathrm{y}_{0: T}^{1}, \mathrm{y}_{0: T}^{2}$)
    \State \phantom{In put:} $E^*$(\texttt{\small WorstEvent})
     \State $O_1$ $\leftarrow$ Estimate set  after $m$ runs of $\mathcal{M}(\mathrm{y}_{0: T}^{1})$
    \State $O_2$ $\leftarrow$ Estimate set  after $m$ runs of $\mathcal{M}(\mathrm{y}_{0: T}^{2})$
    \State $c_1$ $\leftarrow$ $|\{i|O_1[i] \in E^*\}|$
    \State $c_2$ $\leftarrow$ $|\{i|O_2[i] \in E^*\}|$
    \State $p^{+},p_{+} \leftarrow$ PVALUE $(c_1,c_2,m,\epsilon)$
    \State Return $p^{+},p_{+}$
    \EndFunction
  \end{algorithmic}
\caption{\texttt{\small HypothesisTest}}
\label{alg: Hypothesis}
\end{algorithm}

Thus, {\color{black}the verification} reduces to evaluating how the
parameters of two Binomial distributions differ. More precisely,
define $p_i = \mathbb{P}\left(\mathcal{M}\left(\mathrm{y}_{0:
      T}^{i}\right) \in E^{*}\right)$, $i = 1,2$. % and $p_2 =
% \mathbb{P}\left(\mathcal{M}\left(\mathrm{y}_{0: T}^{2}\right) \in
%   E^{*}\right)$.
% The random variables given by the history of sensor
% outputs $\mathrm{y}_{0:T}^i \sim Y^i$, $i = 1,2$, starting from the
% initial condition $x_0$, defines a new random variable $Z^i = 1$ if
% $\mathcal{M}(Y^i) \in E^*$, $Z^i = 0$, otherwise. 
% It is clear that
% $Z^i$ are distributed as a Bernoulli with parameter $p_i$, $i =
% 1,2$. 
By running {\color{black}$m$} times the estimator, we count the
number of $\mathcal{M}\left(\mathrm{y}_{0: T}^{i}\right) \in E^{*}$ as
$c_i$, $i = 1,2$.  Thus, each $c_i$ can be seen as a sample of the
binomial distribution B$(n_i,p_i)$, for $i = 1,2$.  However, instead
of evaluating $p_1 \le p_2$, we are interested in testing the null
hypothesis $p_1 \leq \mathrm{e}^{\varepsilon} p_2$, with an additional
$\mathrm{e}^{\varepsilon}$. This can be addressed by considering
samples $\bar{c}_1$ of B($c_1, 1/e^{\epsilon}$) distribution. It is easy to
see the following:
\begin{lemma}[\cite{ZYD-YXW-GHW-DFZ-DK:18}.] \label{lemma:frompaper}
  Let $Y \sim$ B($n,p_1$), and $Z$ be sampled from B($Y,
  1/e^{\epsilon}$), then, $Z$ is distributed as B($n,p_1/e^{\epsilon}$). 
\end{lemma}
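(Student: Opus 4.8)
The plan is to establish the claim by a direct computation of the probability mass function of $Z$, with a shorter probabilistic ``coupling'' argument as a cross-check.

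First I would condition on the value of $Y$. By construction $Z \mid \{Y = y\} \sim \mathrm{B}(y,q)$, where I write $q = 1/\e^{\epsilon}$ for brevity, so that for each $z \in \{0,\dots,n\}$
\[
\mathbb{P}(Z = z) \;=\; \sum_{y=z}^{n} \binom{n}{y} p_1^{\,y}(1-p_1)^{\,n-y}\,\binom{y}{z}\, q^{\,z}(1-q)^{\,y-z}.
\]
Next I would use the elementary identity $\binom{n}{y}\binom{y}{z} = \binom{n}{z}\binom{n-z}{y-z}$ to factor $\binom{n}{z}(p_1 q)^{z}$ out of the sum and reindex with $j = y-z$. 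What remains is $\sum_{j=0}^{n-z}\binom{n-z}{j}\big(p_1(1-q)\big)^{j}(1-p_1)^{\,n-z-j}$, which by the binomial theorem collapses to $\big(p_1(1-q)+(1-p_1)\big)^{n-z} = (1 - p_1 q)^{\,n-z}$. This yields $\mathbb{P}(Z = z) = \binom{n}{z}(p_1 q)^{z}(1 - p_1 q)^{\,n-z}$, i.e.\ $Z \sim \mathrm{B}(n, p_1/\e^{\epsilon})$, as claimed.

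As an alternative derivation I would represent $Y = \sum_{i=1}^{n} X_i$ with $X_i$ i.i.d.\ Bernoulli$(p_1)$ and implement the second-stage thinning by attaching to each index an independent Bernoulli$(q)$ variable $U_i$, so that $Z = \sum_{i=1}^{n} X_i U_i$. Since the pairs $(X_i,U_i)$ are i.i.d.\ and $\mathbb{P}(X_i U_i = 1) = p_1 q$, the sum $Z$ is immediately $\mathrm{B}(n, p_1 q)$ with $p_1 q = p_1/\e^{\epsilon}$.

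I do not anticipate a genuine obstacle: the statement is a textbook ``thinned binomial'' fact. The only points requiring care are the combinatorial identity and the reindexing step in the summation argument — or, equivalently, verifying that the pairs $(X_i, U_i)$ carry the correct joint law in the coupling argument.
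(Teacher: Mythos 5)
Your proposal is correct, and it actually contains two valid arguments. Your main derivation --- conditioning on $Y$, writing out the pmf of $Z$, using $\binom{n}{y}\binom{y}{z} = \binom{n}{z}\binom{n-z}{y-z}$, and collapsing the remaining sum via the binomial theorem to $(1-p_1 q)^{n-z}$ --- is a more explicit, purely computational route than what the paper does, and it has the virtue of exhibiting the full distribution of $Z$ without any representation argument. The paper instead argues trial by trial: it treats $Y$ as built from Bernoulli$(p_1)$ components $Y_i$ and computes $\mathbb{P}(Z_i=1) = (1/e^{\epsilon})p_1 + 0\cdot(1-p_1) = p_1/e^{\epsilon}$, concluding that $Z$ is B$(n, p_1/e^{\epsilon})$; this is exactly your second, coupling/thinning cross-check, except that your version is the cleaner one --- you make explicit the i.i.d.\ pairs $(X_i,U_i)$ with $U_i$ independent Bernoulli$(1/e^{\epsilon})$ and $Z=\sum_i X_i U_i$, whereas the paper leaves the identification of $Z$ with a sum of thinned trials implicit. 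Either of your two arguments alone would suffice; the pmf computation is self-contained, while the coupling argument matches the paper's (and the cited reference's) intent and is the quickest way to see the ``thinned binomial'' fact.
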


Hence, the problem can be reduced to the problem of testing the null hypothesis
$\bar{H}_0 :\bar{p}_1:= p_1/e^{\epsilon} \leq p_2$ on the basis of the
samples $\bar{c}_1$, $c_2$.  Checking whether or not $\bar{c}_1$,
$c_2$ are generated from the same binomial distribution can be done
via a Fisher's exact test \cite{RAF:35} wth $p$-value being equal to
1 -
Hypergeometric.cdf{\color{black}($\bar{c}_1-1|2m,m,\bar{c}_1+c_2$)} (cumulative hypergeometric distribution)
%\footnote{This
%  notation represents the cumulative hypergeometric distribution.}
  . As
an exact test, $c_1 \gg e^\epsilon c_2$ provides firm evidence against
the null hypothesis with a given
confidence.

The $p$-value is the Type~I error of the test, or the probability
of incorrectly rejecting $H_0$ when it is indeed
true. % If, on the other hand, the test accepts $H_0$ while it
% should be rejected, then the Type~II error occurs. The common practice
% is to control only the Type~I error because it is the most important.
% Mathematically, the Type~I error is defined as the probability
% $p=
% P(W(X_1,\dots,X_n) \in R | H_0 )$, which is also named as
% $p$-value.
The null hypothesis is rejected based on a significance level $\alpha$
(typically $0.1$, $0.05$ or $0.01$).  If $p$ is such that $p \le
\alpha$, then the probability that a mistake is made by rejecting
$H_0$ is very small (smaller or equal than $\alpha$). % In other words, it
% implies the strong evidence that $H_0$ should be rejected. On the
% other hand, if $p > \alpha$, the test should accept $H_0$.
%\sonia{Can you explain a bit more on how the
%null hypothesis is rejected based on the p value? I think this p value depends on the alpha value later, so shouldn't it be part of the test? --- sure}
%explain the probability guarantee
Since Condition~\ref{equ:event} involves two inequalities $p_1 \le
e^\epsilon p_2$ and $p_2 \le e^\epsilon p_1$, we need to evaluate two
null hypotheses. This results into $p_+$ and $p^+$ values that should
be larger than $\alpha$ if we want to accept $H_0$; see
Algorithm~\ref{alg: Hypothesis}. In simulation, we choose
  $m > n$ in both \texttt{\small WorstEventSelector} and
  \texttt{\small HypothesisTest} for the purpose of i) increasing the
  accuracy of the test for the worst event, ii) some additional
  practical considerations as follows. 
  
  In the \texttt{\small WorstEventSelector}, the
    $p$-values for all the events are computed using the same
    $\epsilon$, for example, $\epsilon=1$. And then the
    worse event $E^{*}$ which gives the minimum $p$-value is selected. Later, in
    \texttt{\small HypothesisTest}, $c_1, c_2$ are counted with
    respect to the worse event and then we keep increasing $\epsilon$ and
    running \textbf{PVALUE} with the same $c_1, c_2$ until the
    $p$-value is larger than $\alpha$. The critical values are
    reported in Section~\label{section: experiments}. The
    reasons why we select the same $\epsilon$ in \texttt{\small
      WorstEventSelector} are:
      \begin{itemize}
\item The p-value evolves consistently with respect to the test
  $\epsilon$. For example, the event with $c_1 = 12, c_2 = 27$ is the
  most possible event that shows the violatation of differential
  privacy. From  Figure~\ref{fig:test1}, we can see that no matter
  what value $\epsilon$ is chosen, the $p$-values for that event are
  the smallest, which means in \texttt{\small WorstEventSelector}, it
  will always be returned as the worst event $E^{*}$. This happens
  similarly for the event selected from $c_1=14, c_2=27$, the
  $p-$-values are always the largest. In that case, it is safe to
  select one specific $\epsilon$ in \texttt{\small WorstEventSelector}
  to decide which event is the worst.
  \begin{figure}[H]
\centering
  \includegraphics[width=0.8\linewidth]{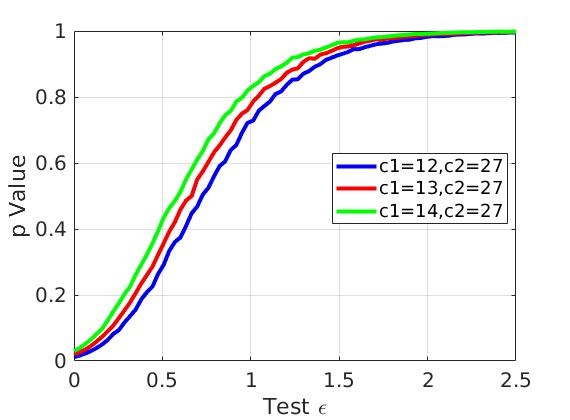}
  \caption{$p$-values evolution with respect to different choices of event. It is clear that no matter what value $\epsilon$ is chosen, $p$-values for the worst event are the smallest.}
  \label{fig:test1}
\end{figure}
\item On the other hand, the variations may come from the
  scenarios that $c_2/c_1=$ constant, but $(c_1,c_2)$ share different
  values, such as (10,20), (15,30), (20, 40). From
  Figure~\ref{fig:test2}, for different $\epsilon$, the $p$-values are
  almost similar. Hence, in this case, in \texttt{\small
    WorstEventSelector}, the returned worst event is still the one of
  the events, if any, that are most likely to show the violations of
  differential privacy.
\begin{figure}[H]
\centering
  \includegraphics[width=0.8\linewidth]{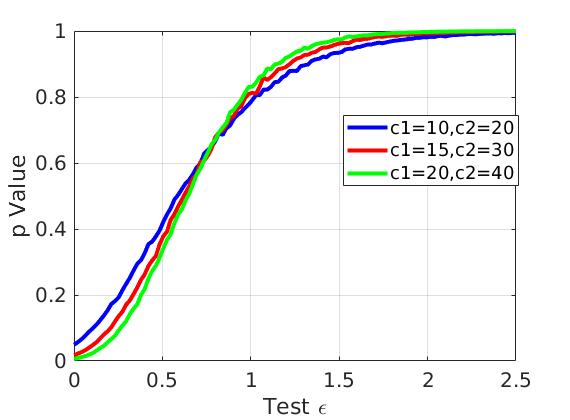}
  \caption{$p$-values evolution with respect to different choices of event. For different choices of event, the variations are not significant.}
  \label{fig:test2}
\end{figure}
\end{itemize}

\vspace{-0.25cm}
\subsection{Theoretical Guarantee}\label{subsection:guarantees}
Here, we specify the guarantees of the numerical test.
\begin{theorem}
  Let ${\mathcal M}$ be a state estimator of System~\ref{equ:system}, and
  let $\epsilon, d, \beta$ and $\gamma \in
  \realnonnegative$. {\color{black}We denote two {\color{black}$d$}-adjacent sensor data as $\mathrm{y}_{0:T}^i$, $i \in\{1,2\}$ and a partition of the high-likely ($1 - \beta$) set $R$  from Algorithm~\ref{alg: Reachable} with high confidence $1 -
  \gamma$ as $\mathcal{P} =\{E_1, \ldots, E_{n}\}$ such that $\mathbb{P}(E_i) \le \eta$ for all $i$}
  . % $R = \hat{R}_{[t_0, t_1]}
  % \times \hat{R}_{[t_0, t_2]} \times \dots \times \hat{R}_{[t_0,
  %   T-N]}$
  Then, if $\Gamma$ is selected accordingly, and the estimator passes
  the test in Algorithm~\ref{alg: overview}, then % , due to
  % Scenario Optimization results and the fact that Fisher's test is
  % exact,
  ${\mathcal M}$ is approximately {\color{black}$(\epsilon$,$d$-adj$)$}
  differentially private wrt $\mathrm{y}_{0:T}^i$, $i \in\{1,2\}$, and
  $\lambda = \beta + 2 \eta \text{e}^\epsilon$, with confidence $(1 -
  \alpha)( 1 - \gamma)$. % \sonia{I realized that we can't state as
    % a general result, unless we do it for several pairs of adjacent
    % measurements. I think we can just use a binomial test (which is
    % also an exact test) and modify the statement accordingly. I need
    % to think about this. }
\end{theorem}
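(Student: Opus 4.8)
The plan is to chain together the three ingredients already in place in the paper: the scenario-optimization sample-complexity bound behind Algorithm~\ref{alg: Reachable}, the exactness of the Fisher test used in Algorithm~\ref{alg: Hypothesis}, and the two reduction lemmas that pass from privacy \emph{with respect to a resolution-$\eta$ partition} and from privacy \emph{with high likelihood} to \emph{approximate} privacy. First I would pin down the phrase ``$\Gamma$ is selected accordingly'': taking $\Gamma=\Gamma(\beta)$ as prescribed in Algorithm~\ref{alg: Reachable}, the scenario-optimization bound invoked from~\cite{AD-MA:20} guarantees that the returned ellipsoidal set $R$ satisfies $\mathbb{P}(\mathcal{M}(\mathrm{y}_{0:T}^{1})\in R)\ge 1-\beta$ with confidence at least $1-\gamma$; equivalently, on that same $1-\gamma$ event, $\mathbb{P}(\mathcal{M}(\mathrm{y}_{0:T}^{1})\in\overline{R})\le\beta$.

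Next I would extract what ``passing Algorithm~\ref{alg: overview}'' provides. For the cell $E^{*}$ returned by \texttt{\small WorstEventSelector}, the call to \textbf{PVALUE} returns $p^{+},p_{+}>\alpha$. By Lemma~\ref{lemma:frompaper}, the thinned count $\bar c_{1}$ is a sample of $B(n,p_{1}/e^{\epsilon})$ and $c_{2}$ of $B(n,p_{2})$, so Fisher's exact test of $\bar H_{0}:p_{1}/e^{\epsilon}\le p_{2}$ has Type~I error at most $\alpha$; since $p^{+}>\alpha$ we retain $\bar H_{0}$, i.e.\ $p_{1}\le e^{\epsilon}p_{2}$, and symmetrically $p_{2}\le e^{\epsilon}p_{1}$, so Equation~\ref{equ:event} holds for $E^{*}$ with confidence $1-\alpha$. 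Because \texttt{\small WorstEventSelector} returns the cell of $\mathcal{P}$ whose empirical $p$-value is smallest --- the one most prone to violate Equation~\ref{equ:event} --- passing the test on $E^{*}$ certifies every $E_{i}\in\mathcal{P}$; one may also invoke the monotonicity of Lemma~\ref{lemma3}, since any violation on a cell is visible on a coarser event containing it. Hence $\mathcal{M}$ is $\epsilon$-$\delta$ differentially private with respect to the partition $\mathcal{P}$ of $R$, with confidence $1-\alpha$.

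It then remains to unwind the two approximate-privacy notions. Applying, to the conditional law given $R$, the lemma that converts $\epsilon$-$\delta$ privacy wrt a resolution-$\eta$ partition into approximate privacy yields $\mathbb{P}(\mathcal{M}(\mathrm{y}_{0:T}^{i})\in E\mid R)\le e^{\epsilon}\mathbb{P}(\mathcal{M}(\mathrm{y}_{0:T}^{j})\in E\mid R)+2\eta e^{\epsilon}$ for all $E$ and $i,j\in\{1,2\}$. Then, following the proof of the high-likelihood reduction lemma with $\theta=\beta$, I would split $\mathbb{P}(\mathcal{M}(\mathrm{y}_{0:T}^{1})\in E)$ over $R$ and $\overline{R}$, bound the $\overline{R}$ contribution by $\mathbb{P}(\overline{R})\le\beta$, and complete $e^{\epsilon}\mathbb{P}(\mathcal{M}(\mathrm{y}_{0:T}^{2})\in E\mid R)\mathbb{P}(R)$ to $e^{\epsilon}\mathbb{P}(\mathcal{M}(\mathrm{y}_{0:T}^{2})\in E)$ by adding the nonnegative term $e^{\epsilon}\mathbb{P}(\mathcal{M}(\mathrm{y}_{0:T}^{2})\in E\mid\overline{R})\mathbb{P}(\overline{R})$, which gives
\[
\mathbb{P}(\mathcal{M}(\mathrm{y}_{0:T}^{1})\in E)\le e^{\epsilon}\,\mathbb{P}(\mathcal{M}(\mathrm{y}_{0:T}^{2})\in E)+\beta+2\eta e^{\epsilon},
\]
and the symmetric inequality after exchanging $\mathrm{y}_{0:T}^{1}$ and $\mathrm{y}_{0:T}^{2}$; this is Equation~\ref{equ:approxDP} with $\lambda=\beta+2\eta e^{\epsilon}$. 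For the confidence level, I would note that Algorithm~\ref{alg: Reachable} and the hypothesis test in Algorithm~\ref{alg: Hypothesis} draw disjoint i.i.d.\ batches of filter runs, so the events ``$R$ is a valid $(1-\beta)$ high-likely set'' and ``the Fisher test does not commit a Type~I error on $E^{*}$'' are independent; intersecting them, the whole argument is valid with probability at least $(1-\alpha)(1-\gamma)$.

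The step I expect to be the crux is the middle one: turning ``the worst \emph{empirical} cell passed its test'' into ``$\epsilon$-$\delta$ privacy holds over the \emph{entire} partition with confidence $1-\alpha$''. This bundles two issues --- the data-driven selection of $E^{*}$, which taken literally would call for a multiple-comparisons correction over the $|\mathcal{P}|$ cells, and the usual gap between ``failing to reject $\bar H_{0}$'' and ``$\bar H_{0}$ holds''. I would treat it exactly as in~\cite{ZYD-YXW-GHW-DFZ-DK:18}, reading ``confidence $1-\alpha$'' as the per-test significance level and arguing that, since $E^{*}$ is constructed to maximize the estimated violation probability, a single test on $E^{*}$ dominates the family-wise error. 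A secondary, purely bookkeeping point is that the resolution-$\eta$ bound is cleanest on the conditional law over $R$, so one should check that $\mathbb{P}(E_{i}\mid R)\le\eta$ --- equivalently, that $\mathbb{P}(E_{i})\le\eta$ in the hypothesis already refers to the conditional probability --- so that $\lambda=\beta+2\eta e^{\epsilon}$ is not off by a factor $(1-\beta)^{-1}$.
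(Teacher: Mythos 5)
Your proposal follows essentially the same route as the paper's own (very terse) proof: the scenario-optimization guarantee for the high-likely set $R$ with confidence $1-\gamma$, the two reduction lemmas combined to give $\lambda=\beta+2\eta e^{\epsilon}$, the exactness of Fisher's test on the worst-case cell for the $\alpha$ part, and the product $(1-\alpha)(1-\gamma)$. It is correct at the same level of rigor as the paper — indeed you make explicit (the accept-vs-fail-to-reject gap, the data-driven selection of $E^{*}$, and the conditional-versus-unconditional resolution bound) exactly the points the paper's proof glosses over.
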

\begin{proof}
  First, from Scenario Optimization $R$ is a highly likely event
  ($1-\beta$) with high confidence ($1-\gamma$), and approximate
  differential privacy with $\lambda = \beta + 2 \eta e^{\epsilon}$ is
  a consequence of the previous lemmas {\color{black}in Section~\ref{section:Approximate}}. Second, as Fisher's test is
  exact, the chosen significance level $\alpha$ characterizes exactly
  the Type I error of the test for any number of samples. If Fisher's
  test is applied to the worst-case event in the partition, we can
  guarantee approximate differential privacy with confidence $1 -
  \gamma$ does not hold with probability $\alpha$. 
\end{proof}
To extend the result over any two {\color{black}$d$}-adjacent sensor data, one would have to sample over the measurement space and evaluate the ratio of
{\color{black} passing the tests.}
% successes the tests are passed. {\color{black} evaluate the ratio of passing the tests. Is there a grammatic error?} 
\vspace{-0.2cm}
\section{EXPERIMENTS} \label{section: experiments} 
In this section, we evaluate our test on a toy dynamical system. All
simulations are performed in MATLAB (R2020a). 

%\subsection{Example description}  
%We first describe the experiment settings that are used in simulation.
%\subsubsection{ Dynamical system.}\label{subsec:dynamicalsystem}
\paragraph{System Example.} Consider a non-isotropic oscillator in
$\real^2$
%\sonia{I'm highlighting changes in
%  blue, please verify them, move them to the thesis, and change
%  them to black} 
  with potential function $ V\left(x^{1}, x^{2}\right)=\frac{1}{2}((x^{1})^{2}+4(x^{2})^{2}).$ 
% \begin{equation*}
%   V\left(x^{1}, x^{2}\right)=\frac{1}{2}\left((x^{1})^{2}+4(x^{2})^{2}\right).
% \end{equation*}
  Thus, the corresponding oscillator particle with position
  $\mathbf{x}_k=(x^1_k,x^2_k) \in \real^2$ moves from initial
  conditions $x^1_0=5, x^2_0=0, \dot{x}^{1}_0=0, \dot{x}^{2}_0=2.5$
  under the force $-\Delta V$. The discrete update equations of our
  (noiseless) dynamic system take the form
%{\color{magenta} \begin{equation*}
%\begin{array}{l}
%\dot{x}_{1}= x_{1}, \\
%\dot{x}_{2}=-4 x_{2}.
%\end{array}
%\end{equation*}
%Integrate the system equations directly and we obtain:
%\sonia{later you
%are using bold notation $\mathbf{x}$ for the state, so please unify
%this notation to bold. YOu are also denoting time as an argument of
%the state (as in $\mathbf{x}(k)$) instead of $x_k$. Unify to
%$\mathbf{x}_k$ then. }
%\begin{equation*}
 $   \mathbf{x}_{k+1} = f_0(\mathbf{x}_k) = A\mathbf{x}_k, \, k \ge 0$,
%\end{equation*}
where, $A$ is a constant matrix. %  Thus, this takes the form
% of System~\ref{equ:autonomous}, and Condition~1 (\textbf{Lipschitz continuity}), it holds with a
% Lipschitz constant for the dynamic given by $c_f =
% \|A\|$ as $\left|f(\mathbf{x}_1)-f(\mathbf{x}_2)\right| \leq$
% $\|A\|\|\mathbf{x}_1-\mathbf{x}_2\|$.
At each time step $k$, the system state is perturbed by a uniform
distribution over $[-0.001,0.001]^2$.
%\sonia{You need to explain here what is the noise
%  affecting this dynamic system. That is, what are the distributions
%  of noise affecting \eqref{equ:system}}
The distribution of initial conditions is given by a truncated
Gaussian mixture
%\sonia{was this a gaussian or a gaussian mixture? -- mixture} 
with mean
vector $(5, 0, 0, 2.5)$ % and bounded by 0.1 measured
% in the $L_2$ distance, which means
uch that $\operatorname{diam}(\mathcal{K}_{0})$ in
Theorem~4 of \cite{VK-SM:20} is~0.1.% \sonia{$\mathcal{K}_0$ is
  % the support of the truncated Gaussian. What is the support of the
  % Gaussian? It's not enough to give the mean vector value. In the
  % presentation, you will have to remind what a truncated Gaussian is
  % --- Will do in the presentation. Here, I mention the noise is
  % bounded.}

%\subsubsection{Sensor network and measurement model.}   
The target is tracked by a sensor network of 10 nodes
located on a circle with center $(0,0)$ and radius $10\sqrt{2}$. The
sensor model is homogeneous and given by:
%\sonia{the notation needs
%  to be unified. Throughout the paper we have used $x_k$ and $y_k$,
%  and not $x(k)$ or $y(k)$, so please unify notation to the first. Put
%  coordinates as super indices, as for sensor indices, just use
%  $y_{i,k}$, where $i$ is sensor, and $k$ is time} 
\begin{align*}
  \mathrm{y}_{k}^i& = h(\mathbf{x}_k, \mathbf{q}_i) + \mathbf{v}_{i,k} \\
  & =
  100 \tanh \left(0.1
    \left(\mathbf{x}_k-\mathbf{q}_{i}\right)\right)+\mathrm{v}_{k}^i, \quad i=1,
  \ldots, 10,
\end{align*}
where $\mathbf{q}_{i} \in \real^2$ is the position of sensor $i$ on
the circle, and $\mathbf{x}_k=\left(x^{1}_k, x^{2}_k\right)$ is the
position of the target at time  $k$. Here, the hyperbolic
tangent function $\tanh$ is applied in {\color{black}an} element-wise way. % Velocities
% are not observed by the sensors. 
The vector $\mathrm{v}_{k}^i\in \real^2$ represents the observation
noise of each sensor, which is generated from the same truncated
Gaussian mixture distribution at each time step $k$ in simulation,
indicating that the volume of random noise has a limit.
%\sonia{You have to give precise details of what is the
%  distribution you use in simulation.} 
%   $\tanh$ function
% introduces nonlinearity into the observation model.
All these observations are stacked together as sensor data
$\mathrm{y}_k=(\mathrm{y}_{k}^{1 \top},\dots,
\mathrm{y}_{k}^{10\top})^\top \in \real^{20} $
%  \sonia{from now on, use a vector boldface notation
%  here too. I'm using transposes because we are assuming all vectors
%  are column vectors}
  in $W_2$-MHE.

  In order to implement the $W_2$-MHE filter, we consider a time
  horizon $T = 8$ and a moving horizon $N = 5 \le T$. % We have:
  
  With these simulation settings, $c_f, l$ in Theorem~4 of
  \cite{VK-SM:20} can be computed as: $c_f=\|A\|$, $l=2(N+1)
  100\|A\|$.

\paragraph{{\color{black}$d$}-adjacent sensor data.} 
Let us use $\mathbf{\theta}^1$ and $\mathbf{\theta}^2 \in \real$ to
represent the angle of a single sensor that is moved to check for
differential privacy. Denote $\Delta \theta = \| \theta^1 - \theta^2
\|$ (distance on the unit circle). 
By exploiting the Lipschitz's properties of the function $h$, it can
be verified that the corresponding measurements satisfy:
$
  d_{\mathrm{y}}
  \left(\mathrm{y}_{0: T}^{1}, \mathrm{y}_{0: T}^{2}\right) \leq 10 \sqrt{2(T-N+1)} \Delta \mathbf{\theta}  =20  \sqrt{2}   \Delta \mathbf{\theta}.
$
%\begin{equation*}
%  d_{\mathrm{y}}
%  \left(\mathrm{y}_{0: T}^{1}, \mathrm{y}_{0: T}^{2}\right) \leq 10 \sqrt{2(T-N+1)} \Delta \mathbf{\theta}  =20  \sqrt{2}   \Delta \mathbf{\theta}.
%\end{equation*}
 Thus, 
in order to generate {\color{black}$d$}-adjacent sensor data, we take
$
\Delta \mathbf{\theta} \leq \frac{{\color{black}d}}{20  \sqrt{2} }.
$
%\begin{equation}\label{eqn:delta}
%\Delta \mathbf{\theta} \leq \frac{\delta}{20  \sqrt{2} }.
%\end{equation}
In the sequel, we take ${\color{black}d} = 10$. 
%\sonia{if this is not the case say ``unless otherwise stated...''}
% In our simulations, given $\delta$, we take $m=1$, so
% that one of $p$ sensors is moved to a new position that is
% $\frac{\delta}{20  \sqrt{2}}$ close on the circle.
%For each time step $k$, $s_k$ is
%assigned with same value and we want to verify the correctness of
%  \ref{equ:inequality}.
\newline

\vspace*{-1ex}

\textbf{Numerical Verification Results of $W_2$-MHE.}
%\subsection{Numerical Verification Results of $W_2$-MHE}
Here, evaluate the differential privacy of $W_2$-MHE with parameters
$T=8$ and $N = 5$.  An entropy factor $s_k \in [0,1]$ determines the
distribution of the filter, from $s_k = 1, \forall k$ --- a deterministic
$\mathcal{M}$ to $s_k=0, \forall k$ --- a uniformly distributed random variable.

Fixing $\mathrm{y}_{0:T}^1$, we run the $W_2$-MHE filter for a number
of $\Gamma$ ($= 814$) runs to obtain a high-likely set characterized
by $\beta=0.05, \gamma=10^{-9}$.  % and each of them are
% initially generated from the truncated Gaussian distribution specified
% before.
This allows us to produce an ellipsoid that contains at least 0.95 of
the high-likely set with probability $\ge 1 - 10^{-9}$.  At each time
step $k$, we consider a grid partition of the high-likely set
consisting of 4 regions ($r=2$).  The \texttt{\small EventList} is
obtained by storing all of the possible combinations of these sets,
which results in a total of $4^4 = 256$ for $T = 8, N = 5$. Followed
by this, we re-run $W_2$-MHE enough times using both sets of sensor
data, respectively. As shown in Algorithm~\ref{alg: Selector}, we
obtain $c_1, c_2$ for each event and, finally the event with minimum
$p$-value is returned as the \texttt{\small WorstEvent}.  After this,
we record $c_1, c_2$ with respect to this event from another set of
runs. Then, $p$-values are computed for different values of
$\epsilon$. We then use these with the significance parameter
$\alpha$ {\color{black}(0.05 in this work)} to accept or reject the null hypothesis and decide whether {\color{black}$(\epsilon$,$d$-adj$)$} differential privacy is satisfied.

For a fixed sensor setup $\mathbf{Q_1}$, we now report on the
numerical results. 
%\sonia{I think we have to include details in an
%  extended version} 
  We first show the simulation results when no
entropy term exists ($s_k=1, \forall k$ in Theorem~4 of
\cite{VK-SM:20}. We use $s$ to denote $s_k, \forall k$ below).
In Figure~\ref{fig:W2MHE1}, from the left figure, the $p$-value is always equal to 0, which means that the
null hypothesis should always be rejected for all test
$\epsilon$. Thus, the two sets of sensor data are distinguishable when
$s=1$.
%It can be explained by analyzing the right figure: it plots the ground-truth states as well as the estimates using both sets of sensor data. 
On the other hand, the estimate RMSE error using the correct sensor data
generated from $\mathbf{Q_1}$ ($\subscr{E}{correct}$) is almost zero
and much smaller than $\subscr{E}{adjacent}$. This error ($\subscr{E}{adjacent}$) employs
sensor data that are in fact generated from adjacent sensor
positions to $\mathbf{Q_1}$.
\begin{figure}[t]
\centering
\begin{subfigure}{0.23\textwidth}
\centering
\includegraphics[height=3.6cm,width=4.2cm]{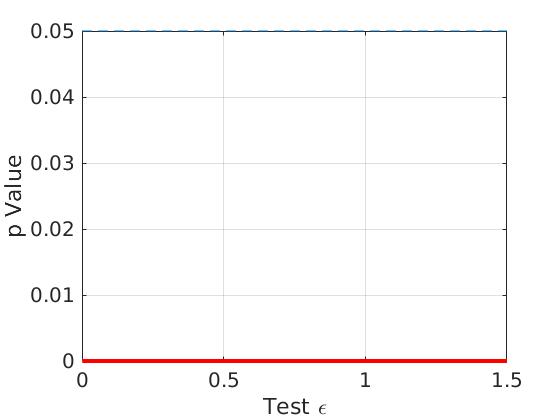}
\caption{Hypothesis test results}
\label{fig:epsilon08}
\end{subfigure}
\begin{subfigure}{0.23\textwidth}
\centering
\includegraphics[height=3.6cm,width=4.2cm]{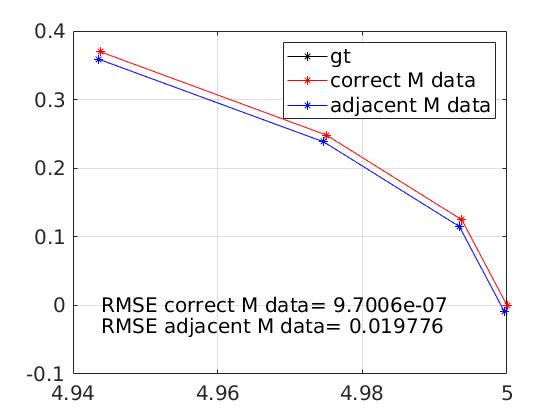}
\caption{Estimation accuracy}
\end{subfigure}
\caption{$W_2$-MHE: State estimation \& privacy test results ($s =
  1$)}
\label{fig:W2MHE1}
\end{figure}

Then, we set $s=0.8$ and the simulation results are shown in Figure~\ref{fig:W2MHE} (left figure). 
We obtain the critical value $\epsilon_c=0.39947$, where the $p$-value
is larger than 0.05. This means that the null hypothesis is not
rejected and we accept that differential privacy holds for $\epsilon
\ge 0.39947$, and $d = 10$ and these two sensor data sets. The
corresponding estimate errors are found to be $\subscr{E}{{correct}} =
0.0040408$ and $\subscr{E}{\text{adjacent}} = 0.026032$ for the
estimates using the sensor data generated from $\mathbf{Q_1}$  and adjacent sensor positions,
%\sonia{shouldn't we have two of these
%  lines? or do they overlap? -- Do you mean the red and blue lines in the figure?},%
respectively. Recall that $s=0.8$
implies a relatively low noise injection level. 
%Note that here,
%$E_{\text{adjacent}}$ is larger than $E_{\text{correct}}$ because the
%adjacent sensor data are in fact generated from another set of sensor
%positions.
\begin{figure}[t]
\centering
\begin{subfigure}{0.23\textwidth}
\centering
\includegraphics[height=3.6cm,width=4.2cm]{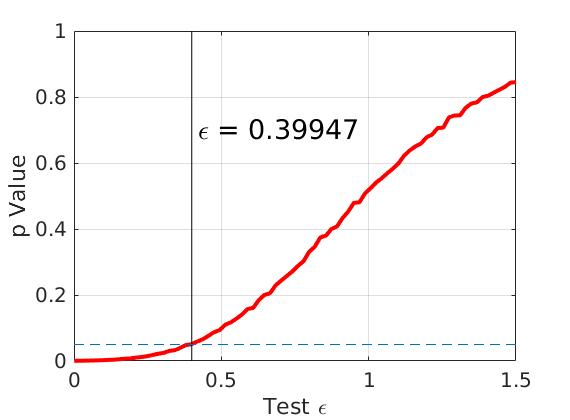}
\caption{$s=0.8$}
\label{fig:epsilon08}
\end{subfigure}
\begin{subfigure}{0.23\textwidth}
\centering
\includegraphics[height=3.6cm,width=4.2cm]{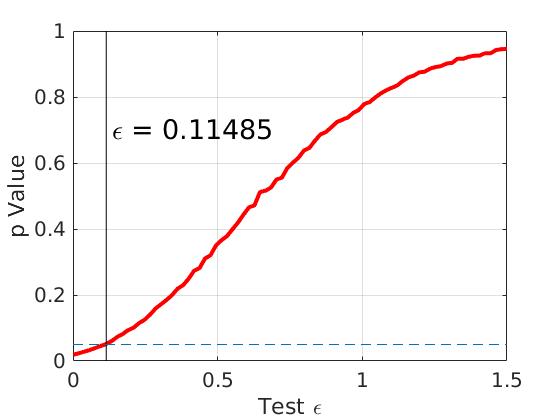}
\caption{$s=0.7$}
\end{subfigure}
\caption{$W_2$-MHE: Privacy test results}
\label{fig:W2MHE}
\end{figure}

Decreasing $s$ to 0.7, leads to a larger entropy term in the
$W_2$-MHE: as shown in Figure~\ref{fig:W2MHE} (right figure), 
the critical
$\epsilon_c$ becomes smaller ($\epsilon_c = 0.11485$), confirming that a
higher level of {\color{black}$(\epsilon$,$d$-adj$)$} differential privacy is achieved. {\color{black}There is also} a
decrease in accuracy, which can be seen from $E_{\text{correct}} =
0.00599996$.

Therefore, the tests reflect the expected trade-off between
differential privacy and accuracy. In order to choose between two
given estimation methods, a designer can either (i) first set a bound
on what is the tolerable estimation error, then compare two methods
based on the differential privacy level they guarantee based on the
given test, or (ii) given a desired level of differential privacy,
choose the estimation method that results into the smallest estimation
error.

{\color{black}Regarding the approximation term $\lambda$, we can compute their values as discussed in Section~\ref{subsection:guarantees}. For each event $E_i$, $\mathbb{P}(E_i)$ can be approximated as $c_1/n$, so $\eta$ is obtained via $\max(\mathbb{P}(E_i))$; $\beta=0.05$ is fixed and $\epsilon_c$ values are known from tests. Therefore, when $s=0.8$, we get $\eta=0.013$ and $\lambda=0.0888$; $s=0.7$, we get $\eta=0.010$ and $\lambda=0.0724$.}
 \paragraph{Input Perturbation. }
 The work \cite{JLN-GJP:14} proposes two approaches to obtain
 differentially-private estimators. The first one randomizes the
 output of regular estimator ($W_2$-MHE belongs to this class). The
 second one perturbs the sensor data, which is then filtered through
 the original
 estimator. % Since only the input signals are perturbed, no
 % changes of the estimators are necessary. 
 An advantage of this approach is that users do not need to rely on a
 trusted server to maintain their privacy since they can themselves
 release noisy signals. The question is which of the two approaches
 can lead to a better estimation result for the same level of differential privacy.

 We can now compare these approaches numerically for the $W_2$-MHE
 estimator method. By selecting $s = 1$ and adding a Gaussian noise
 directly to both sets of adjacent sensor data, we re-run our test to
 find the trade-off between accuracy and privacy. The Gaussian noise
 has zero mean and the covariance matrix $Q = (1 -\bar{s}) (I +
 \frac{R + R'}{2})$,
% $Q$ is defined as:
% \begin{equation}\label{eqn:input}
%     Q = (1-s)(I + \frac{R + R'}{2}) .
% \end{equation}
 where, $I$ is the identity matrix and $R$ is a matrix of random
 numbers inl $(0,1)$. A value $\bar{s} \in [0,1]$ quantifies how much
 the sensor data is perturbed.  In simulation, we change the value of
 $\bar{s}$ and compare the results.

 We first set $\bar{s} = 0.944$ in $Q$, see
Figure~\ref{fig:NoiseInput56} (left figure). The $\epsilon_c$ and
 estimate error are found to be $\epsilon_c = 0.70306,
 E_{\text{correct}} = 0.0010771$ {\color{black}and $\lambda=0.1550$}. Then, we decrease $\bar{s}=0.894$, see
Figure~\ref{fig:NoiseInput56} (right figure). We obtain $\epsilon_c = 0.41844$, $E_{\text{correct}} = 0.0013998$ {\color{black}and $\lambda=0.1351$}. They also show that the higher level of differential privacy is achieved at the loss of accuracy. {\color{black}For the four cases, $\lambda$ values are not significant, indicating the approximation method is meaningful.}
 
\begin{figure}[t]
\centering
\begin{subfigure}{0.23\textwidth}
\centering
\includegraphics[height=3.6cm,width=4.2cm]{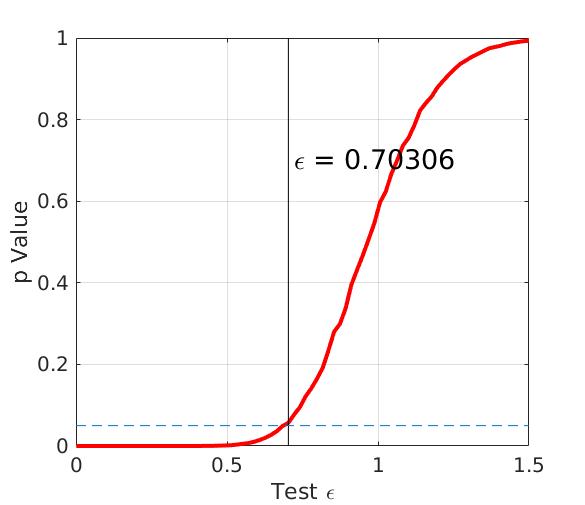}
\caption{$\bar{s} = 0.944$}
\end{subfigure}
\begin{subfigure}{0.23\textwidth}
\centering
\includegraphics[height=3.6cm,width=4.2cm]{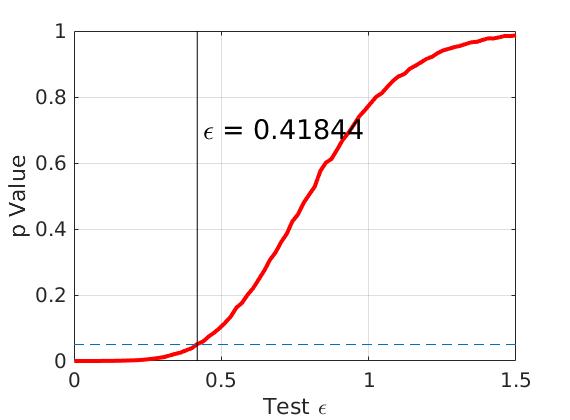}
\caption{$\bar{s} = 0.894$}
\end{subfigure}
\caption{Input perturbation: Privacy test results}
\label{fig:NoiseInput56}
\end{figure}

 From ($s=0.8$, $\epsilon_c=0.39947$, $E_{\text{correct}} =
 0.0040408$) and ($\bar{s}=0.894$, $\epsilon_c=0.41844$,
 $E_{\text{correct}} = 0.0013998$), we find that although the level of
 differential privacy is close to each other, the estimate error of
 $\bar{s}=0.894$ is only $1/3$ of that of $s=0.8$. Thus, the second
 mechanism (adding noise directly at the mechanism input) seems to
 indicate that {\color{black}it} can lead to better accuracy while maintaining the same
{\color{black}$(\epsilon$,$d$-adj$)$} differential privacy guarantee for this set of
 sensors.

 We further compare the performances of the two mechanisms on other
 sensor setups (e.g. uniformly located on the circle), see
 Table~\ref{table:compa}.

% With s_k
%\begin{table}[H]
%\begin{center}
%\begin{tabular}{ |c|c|c|c| } 
%\hline
%Sensor Setup & $W_2$MHE & Input Perturbation & Better choice \\
%\hline
%\multirow{3}{*}{$\mathbf{Q_2}$} & $s=0.7$ & $\bar{s} =0.284$ & \multirow{3}{*}{$W_2$MHE}  \\
% & $\epsilon_c = 0.53229$ & $\epsilon_c = 0.72204$ & \\
%& $E_{\text{correct}} =
%0.0049874$ & $E_{\text{correct}} =
%0.0049674$ & \\
%\hline
%\multirow{3}{*}{$\mathbf{Q_3}$} & $s=0.8$ & $\bar{s} = 0.684$ & \multirow{3}{*}{$W_2$MHE} \\
%& $\epsilon_c = 0.98768$ & $\epsilon_c = 2.3423$  & \\
%& $E_{\text{correct}} = 0.0030866$ & $E_{\text{correct}} =
%0.0037826$  & \\
%\hline
%\end{tabular}
%\end{center}
%\caption{Comparisons of two mechanisms} \label{table:compa}
%\end{table}

% Without s_k

%0.1011 for Q2 W_2MHE
\vspace{-0.3cm}
\begin{table}[H]
\setlength{\abovecaptionskip}{-0cm}
\begin{center}
\begin{tabular}{ |c|c|c|c| } 
\hline
Sensor Setup & $W_2$MHE & Input Perturbation & Better choice \\
\hline
\multirow{3}{*}{$\mathbf{Q_2}$} & $\epsilon_c = 0.53229$ & $\epsilon_c = 0.72204$ & \multirow{3}{*}{$W_2$-MHE}  \\ 
& {\color{black}$\lambda = 0.1011$} & {\color{black}$\lambda$=0.2106}& \\
& $E_{\text{correct}} =
0.0049874$ & $E_{\text{correct}} =
0.0049674$ & \\
\hline
\multirow{3}{*}{$\mathbf{Q_3}$} & $\epsilon_c = 0.98768$ & $\epsilon_c = 2.3423$ & \multirow{3}{*}{$W_2$-MHE} \\
& {\color{black}$\lambda = 0.1037$} & {\color{black}$\lambda = 0.8408$}& \\
& $E_{\text{correct}} = 0.0030866$ & $E_{\text{correct}} =
0.0037826$  & \\
\hline
\end{tabular}
\end{center}
\caption{Comparisons of two mechanisms} \label{table:compa}
\end{table}
\vspace{-0.5cm}
From the table, we see that performance depend on the specific sensor
setups. In 2 out of 3 sensor setups, perturbing the filter output
seems the better option. However, more simulation results are needed
to reach a reliable conclusion. {\color{black}Besides, we see that the approximation method works only when differential privacy holds ($\epsilon_c$ is relatively small).}
%\sonia{with the new simulations we can
%  confirm or refute as sensor positions change.} % However,

\paragraph{Differentially private EKF {\color{black}\cite{JLN-GJP:14}}.}
The framework can also be applied to other differentially private
estimators. For comparison, we evaluate the performance of an extended
Kalman filter applied to the same examples. Compared
EKF, random noise $\frac{1-\hat{s}}{\hat{s}} w$ ($w$ is uniformly distributed over $[0,1]$) is added to the filter output at the update step, which makes the estimator differentially private. The initial guess $\mu_0$ is the same for the EKF and for the $W_2$-MHE.

%{\color{black}For $s = \hat{s} =1$, and at the expense of a larger
%  computational time, we observe that the EKF results into an RMSE
%  that is about an order of magnitude larger than that of
%  $W_2$-MHE.\yunhai{I don't think we can state that. The computational
%    time for EKF is much shorter. and I did not compare the accuracy.}
%  \sonia{Indeed, I have compared the accuracy of the noise free
%    versions with Vishaal. Maybe we can say we 'report' it, but not
%    include that comparison here. And yes, here we are not talking
%    about compututational complexity, the sentence indeed says that
%    the computational cost is larger.}}

Table~\ref{table:EKF} illustrates the EKF test results. % and fine-tune the EKF covariance matrices.
% ($R_{k}$ and $Q_{k}$ in   \ref{Equ:EFK}).
Compared with those of $W_2$-MHE, the performance of the EKF remains
to be worse than that of $W_2$-MHE with respect to both privacy level
and RMSE (for all sensor setups). This is consistent with the fact
that $W_2$-MHE performs better than the EKF for multi-modal
distributions.  % Besides, for $W_2$-MHE, no extra efforts of
% tuning parameters or assumption of Gaussian noise distribution are
% required, which makes it more applicable in complex environments.

% with s_k
%\begin{table}[h]
%\begin{center}
%\begin{tabular}{ |c|c|c|c| } 
%\hline
%Sensor Setup & $\epsilon_c$ & $E_{\text{correct}}$ & Better choice\\
%\hline
%$\mathbf{Q_1} \, (\hat{s} = 0.96)$ & $\quad 0.46223 \quad$ & $\quad 0.0066205 \quad$ & $W_2$MHE \\
%\hline
%$\mathbf{Q_2} \, (\hat{s} = 0.93)$ & $\quad 1.9239 \quad$ & $\quad 0.0064686 \quad$ & $W_2$MHE \\
%\hline
%$\mathbf{Q_3} \, (\hat{s} = 0.92)$ & $\quad 2.3085 \quad$ & $\quad 0.0062608 \quad$ & $W_2$MHE \\
%\hline
%\end{tabular}
%\end{center}
%\caption{Diff-private EKF: State estimation \& differential privacy test results} \label{table:EKF}
%\end{table}
% without s_k
\begin{table}[h]
\setlength{\abovecaptionskip}{-0cm}
\begin{center}
\begin{tabular}{ |c|c|c|c| } 
\hline
Sensor Setup & $\epsilon_c$ & $E_{\text{correct}}$ & Better choice\\
\hline
$\mathbf{Q_1} $ & $\quad 0.46223 \quad$ & $\quad 0.0066205 \quad$ & $W_2$-MHE \\
\hline
$\mathbf{Q_2} $ & $\quad 1.9239 \quad$ &$\quad 0.0064686 \quad$ & $W_2$-MHE \\
\hline
$\mathbf{Q_3} $ & $\quad 2.3085 \quad$ & $\quad 0.0062608 \quad$ & $W_2$-MHE \\
\hline
\end{tabular}
\end{center}
\caption{Diff-private EKF test results} \label{table:EKF}
\end{table}
\paragraph{Correctness of  sufficient condition for $W_2$-MHE.}
Theorem~4 of \cite{VK-SM:20} provides with a theoretical formula to
calculate $s$ that guarantees {\color{black}$(\epsilon$,$d$-adj$)$} differential
privacy.
%\sonia{you always forget the $\delta$ in the
%  differential privacy condition, and that is important. the smallest
%  the delta, the better the differential privacy for the same $\epsilon$}. 
Since this condition is derived using several assumptions and
upper bounds, the answer is in general expected to be conservative.

In order to make comparisons, we choose the sensor setup
$\mathbf{Q_2}$ and take $s=0.8$ for simulation. The value of other
parameters are: $T = 7 \text{ (for less computation)}, c_f = 1.0777,
c_h = 100, l = 1293.2 (N=5), \operatorname{diam}(\mathcal{K}_{0})=0.1,
{\color{black}d}=10$. Plug these values into the theorem, we can obtain
$\epsilon \ge 6474.3$. In simulation, the critical $\epsilon_c$ =
0.89281. Upon inspection, it is clear that the theoretical answer is
much more conservative than the approximated one, which indicates that
if $\epsilon \ge 0.89281$,  differential privacy is satisfied with
high confidence wrt the \textit{given space partition}. While this is
a necessary condition for privacy,  we run a  few more simulations to
test how this changes for finer space partitions.
\begin{enumerate}
\item $r = 3$ (9 regions per time step), $\epsilon_c = 0.98162$
\item $r = 4$ (16 regions per time step), $\epsilon_c = 1.9939$
\item $r \ge 5$, much more computation are required (numebr of events increases exponentially)
%\yunhai{Much more simulation steps are required, which is a potential problem, I can report the issue when you are back.}
\end{enumerate}
As observed, a finer space partition leads an increase of
$\epsilon_c$.  But the theoretical bound is still far from the observed
values.% But, on the contrary, for a given $\epsilon$, if
% different privacy holds for a finer space partition, it also holds for
% a coarser one, as illustrated in Lemma~\ref{lemma3}.

%\sonia{I think what we have in this section, is
%  inconclusive: There is a significant gap, however, we can't say for
%  sure diff privacy holds wrt to much smaller event sets for this
%  $\epsilon$, unless we do more simulations. It'd be good to do this
%  for the paper: refine the events adaptively and see if the chance of
%  diff privacy not holding is very small or
%  insignificant  --- That make senses. Though we can not deal with extremely small events, we can try to increase the resolution. I can mention this during presentation.} %However, it still provides us with an approach of
%%finding the potential upper bound in a theoretical manner.
\vspace{-0.1cm}
\section{CONCLUSION} \label{section: discussion}
\vspace{-0.1cm}
 This work presents a
numerical test framework to evaluate the differential privacy of
continuous-range mechanisms such as state estimators. This includes a
precise quantification of its performance guarantees.  % To do this, we
% rely on a data-driven method to first find an approximation of a
% highly-likely event set, then, show how a finite-resolution partition
% of this set can be used to test privacy wrt to a discrete collection
% of events. 
Then, we apply the numerical method on differentially-private versions
of the $W_2$-MHE filter, and compare it with {\color{black}other} competing
approaches. Future work will be devoted to obtain more efficient
algorithms that; e.g.,~refine the considered partition
adaptively.
% the differential privacy of state estimators and their
% comparison. In particular, we apply
% We first introduce the $\epsilon$-$\delta$ differentially
% private mechanisms in the context of estimation by describing a moving
% horizon estimator that guarantees $\epsilon$-$\delta$ differential
% privacy. We then clearly establish each components of the framework,
% including the approximation of reachable set by a data-driven
% method. In experiments, we demonstrate the estimator's performance and
% build the connections between the theoretical solution and the
% numerical solution. Also, we discuss on the other approaches and
% compare the simulation results. It is obvious that the framework can
% be applied to different estimators and this can benefit all
% researchers or designers working on this field.

{\small
\bibliographystyle{./IEEEtran}
% \bibliography{IEEEabrv}
\bibliography{alias,SMD-add}
}
\end{document}